\providecommand{\keywords}[1]{\small \textbf{\textit{Keywords---}} #1}
\renewcommand{\ast}{\oast}
\newcommand{\eq}[1]{\eqref{eq:#1}}
\newcommand{\transp}{\top}
\newcommand{\reallywidehat}{\widehat}
\title{Randomization Inference When N Equals One}
\begin{document}

\author[1]{Tengyuan Liang} 
\author[2]{Benjamin Recht}

\affil[1]{University of Chicago} \affil[2]{University of California, Berkeley}
\date{}

\maketitle 
\begin{abstract}
	N-of-1 experiments, where a unit serves as its own control and treatment in different time windows, have been used in certain medical contexts for decades. However, due to effects that accumulate over long time windows and interventions that have complex evolution, a lack of robust inference tools has limited the widespread applicability of such N-of-1 designs. This work combines techniques from experiment design in causal inference and system identification from control theory to provide such an inference framework. We derive a model of the dynamic interference effect that arises in linear time-invariant dynamical systems. We show that a family of causal estimands analogous to those studied in potential outcomes are estimable via a standard estimator derived from the method of moments. We derive formulae for higher moments of this estimator and describe conditions under which N-of-1 designs may provide faster ways to estimate the effects of interventions in dynamical systems. We also provide conditions under which our estimator is asymptotically normal and derive valid confidence intervals for this setting.
\end{abstract}

\keywords{causal inference, system identification, potential outcomes, interference, time series.}


\section{Introduction}

Randomized experiments are arguably the most important contribution of twentieth-century statistics, providing a program to establish the value of interventions to populations of individuals. By randomly assigning individuals to treatment and control, statistical analysis can simultaneously determine the magnitude of intervention effects and eliminate confounding explanations. However, a significant drawback of randomized experiments is that they yield conclusions about populations rather than individuals. When treatments present heterogeneous effects, a randomized experiment cannot inform individuals whether a treatment can work for them.

Is it possible to design experiments for individuals? \emph{N-of-1 trials} attempt to solve this problem by having individuals apply randomized treatments in different time windows. The individual becomes the treatment and the control. N-of-1 trials are within-patient randomized controlled trials, where the treatment is randomized at every time period, with the subject potentially crossing over between treatment and control at each phase. 

In medicine, the formalized N-of-1 trial is nearly as old as the randomized clinical trial itself. In 1950, only a few years after the famous randomized trials of Streptomycin, Quin and colleagues performed randomized within subject trials of an arthritis reduction agent where patients received a random treatment at multiple doctor visits~\citep{quin_clinical_1950}. D. D. Reid explained Latin Square designs for N-of-1 trials and explored these for testing the effectiveness of laxatives~\citep{Reid60} and arthritis ointments~\citep{ReidLancet54}. Snell and Armitage tested heroin as a cough suppressant with N-of-1 designs~\citep{snell_clinical_1957}. Bradford Hill even devotes a lengthy part of the Clinical Trials chapter (pages 260-263) in his \emph{Principles of Medical Statistics} to the design of N-of-1 trials~\citep{HillPrinciplesBook}. 

Since the 1950s, though never a dominant experimental design, N-of-1 trials have continued to find application. Louis et al. reported 28 self-controlled studies in the New England Journal of Medicine in 1978 and 1979~\citep{louis_crossover_1984}. Guyatt et al. argued for using N-of-1 trials as part of general clinical practice with patients, helping individuals find the best treatment for them~\citep{guyatt_determining_1986}. N-of-1 trials have been recently used successfully to test the side effects of statin treatments~\citep{wood_n--1_2020}.

In the online experimentation community, N-of-1 experiments are often called switchback or interleaving designs. Netflix and LinkedIn use N-of-1 designs to test the time-varying effects of their recommendation and matching algorithms. For example, \cite{bojinov2020avoid} reports interleaving online experiments where a client is randomly exposed to different app experiences for 30-minute intervals. N-of-1 designs have the potential to be tailored to personalized experiments to improve the experience of individual customers. Moreover, N-of-1 designs enable data scientists to mitigate interference effects from multiple AB tests running concurrently~\citep{thomke2020experimentation}.

If the effect is immediate and transient, standard tools from randomized experiments can be used to estimate its magnitude and significance in N-of-1 trials. However, when the effect of an intervention persists over time, it becomes challenging to draw proper inferences. Though recent estimators have attempted otherwise, most N-of-1 designs try to reduce the problem to a randomized experiment where each time period is independent. In such a case, the treatment must take effect and cease rapidly. These designs also must assume an individual's baseline is roughly constant. For these reasons, N-of-1 designs have primarily focused on chronic conditions treatable by acute medications.

Can we do better? In this work, we aim to combine techniques from experiment design in causal inference with system identification from control theory to provide a general inference framework for N-of-1 experiments. In what follows, we describe the foundational pillars of our framework and a view of how they might be combined for inferring effects in N-of-1 experiments. We then present a formal interference model for temporal effects drawn from the literature on dynamical systems. We define a family of estimands of causal effects of these models and then present a reasonable, unbiased estimator for this family. We demonstrate that this estimator is asymptotically normal and provide a plug-in estimator for its variance. Together, these enable the construction of valid confidence intervals. Along the way, we discuss the inferential challenges of this framework and why our approach, though more general than previous methods, remains limited.

\subsection{Foundational Related Work}
Our work draws and expands upon the literature on causal inference and system identification. In this section, we elaborate on the foundations we draw upon and how we will attempt to reconcile these two lines of research. There is also a growing list of existing works to study dynamic treatment effects. To better position our work, we will discuss and compare the existing approaches in detail in Section~\ref{sec:comparison}, after introducing our model for temporal interference, estimands, and estimators.

\paragraph{Causal inference} The potential outcomes framework \citep{neyman1923application,rubin1974estimating} offers a systematic and rigorous approach to studying causal relationships in experimental and observational studies. In the simplest setting, a \emph{unit} $i$ has two potential outcomes under an intervention (also called a treatment). The observed outcome of unit $i$ is denoted by $y_i$, and we distinguish the outcome under no treatment as $y_i(0)$ and under treatment as $y_i(1)$. If we let $x_i$ denote the binary indicator equal to $1$ when treatment is applied and $0$ when treatment is not applied, then we have the expression
$$
	y_i := y_i(1)\cdot x_i + y_i(0) \cdot(1-x_i)
$$
This formula holds regardless of whether $y_i$ are real-valued or binary outcomes.

Though this formula looks like it is tautological, it makes some hidden assumptions. In particular, this model assumes that the treatment applied to unit $i$ only impacts unit $i$. This is called the \emph{Stable Unit Treatment Values Assumption} (SUTVA). This assumption does not hold for within-unit trials. If a treatment is applied in succession to a unit, care must be taken to ensure that the previous treatment does not influence the outcomes of future treatments.

When the $x_i$ can be intentionally assigned, random assignment enables estimation of the average outcome under treatment and control. That is, if we are interested in estimating the quantity
\begin{align*}
	\tau := \tfrac{1}{n} \sum_{i=1}^n y_i(1)-y_i(0) \;,
\end{align*}
we can do an experiment where $x_i$ is assigned at random and then estimate $\tau$ from the observed outcomes.

Assume, for simplicity, that the $x_i$ are assigned as independent Bernoulli coin flips. Denote the centered treatment variable $z_i = 2x_i-1 \in \{\pm 1\}$. Then the \emph{Horvitz-Thompson estimator} $\widehat{\tau}_{\mathrm{HT}}$ 
\begin{align*}
	\widehat{\tau}_{\mathrm{HT}} := \tfrac{2}{n} \sum_{i=1}^n z_i \cdot y_i \;
\end{align*}
is an unbiased estimator of the average treatment effect. When SUTVA applies, inference about the significance of causal effects is also elegant. $\widehat{\tau}_{\mathrm{HT}}$ is asymptotically normal, and confidence intervals can be constructed using upper bounds on the variance of $\widehat{\tau}_{\mathrm{HT}}$ (see, for example,~\cite{li2016exact}).

\paragraph{Beyond SUTVA} For causal questions about time series data, SUTVA rarely applies~\citep{sims1972money,granger1969investigating}. Suppose each unit is exposed to a sequence of interventions indexed by time, $x_1$, $x_2, \ldots, x_T$. Even if these interventions are binary-valued, the outcome at time $t$, $y_t$ could potentially depend on all of the interventions (or outcomes) before time $t$. Namely, the potential outcomes $y_t(x_1,\ldots, x_t)$ is a function of $x_1,\ldots, x_t$. We could equivalently represent the observed outcome $y_t$ then as a sum over all possible interactions---a discrete Fourier expansion of $y_t(x_0, x_1, \ldots, x_t)$ over the hypercube, see \cite{o2014analysis}. That is, without loss of generality, 
\begin{align}
	\label{eqn:discrete-fourier}
	y_t = \sum_{S: S\subseteq \{0, 1,\ldots, t\}} \alpha_{t, S} \cdot \prod_{i \in S} z_i\;,
\end{align}
where again, $z_i=2x_i-1$. The standard Potential Outcomes framework is the special case when the coefficients $\alpha_{t, S}  = 0$ for all $S \neq \emptyset$ or $\{ t\}$. However, the expansion has a number of terms that grow exponentially with $t$, a curse of dimensionality preventing meaningful statistical analysis.

Since a full representation is intractable, more restricted interactions have been studied. The Granger-Sims causality framework \citep{granger1969investigating,sims1972money,angrist2011causal} tests correlations in $x$'s that can explain $y$. For example, in its simplest form, Granger Sims studies how well linear expansions of the form
\begin{align*}
	y_t = \alpha_{\emptyset} + \alpha_0 \cdot x_t + \alpha_1 \cdot x_{t-1}+ \ldots + \alpha_{t} \cdot x_{0} + \mathrm{error}_t 
\end{align*}
predict outcomes. Granger causality assumes \emph{time-invariance} to provide practically useful answers to understand when time series can predict each other.

In this paper, we study this time-invariant model, under the assumption that the $x_i$ are chosen interventionally as part of an experiment.

\paragraph{System identification} In control theory, modeling the input-output behavior of a dynamical system is a first step in designing a feedback policy. Such dynamical systems are assumed to obey dynamical equations of the form
$$
	s_{t+1} = f_t(s_t, x_t,\varepsilon_t), ~~
	y_t = h_t(s_t,x_t,\varepsilon_t)\,.
$$
$x_t$ in this context is a controllable input to the dynamical system. $y_t$ is the measured output and $\varepsilon_t$ is an exogenous noise process. $s_t$ is the \emph{state} of the dynamical system. The state is the variable that makes future outputs conditionally independent of past outputs.

The goal in system identification is to find appropriate estimates of the function $f_t$ and $h_t$ so that inputs $x_t$ can be planned to steer $y_t$ to desired values. 

Just as we have described above, the identification problem is only tractable if the class of models for $f_t$ and $h_t$ is simple enough. Several reduced model classes are useful for representing dynamical systems. One of the most fundamental models assumes that the functions $f_t$ and $h_t$ are linear (but might vary over time). In such a case, we can write the input-output map as
$$
	y_t = \sum_{k \leq t} G_{tk} x_k + e_t
$$
where $G_{tk}$ are scalars and $e_t$ are linear functions of the $\varepsilon_k$ variables and $s_1$.

Under a further restriction, we can assume the system is linear and \emph{time-invariant} so that $f_t=f$ and $h_t=h$ for all $t$. In this case, $y_t$ can be written as a linear convolution of $x_k$ and a sequence $g_k$:
$$
	y_t = \sum_{k \leq t} g_{t-k} x_k + e_t\,.
$$
This model is the same as the 1-step Granger Causality model described in the previous section.

Identifying linear time-invariant systems has a rich literature (see the textbook by Ljung~\citep{LjungBook}). Importantly, the choice of $x_k$ matters as this input design must make all $g_k$ of interest identifiable. The most popular choice in theory and practice chooses $x_k$ to be an i.i.d. sequence of zero mean random variables. In practice, it suffices that these variables take on two values, such as $1$ and $-1$~\citep{mareels1984sufficiency,Verhaegen92,Overschee94}. When restricted to such signed inputs, the linear system identification problem looks like a Potential Outcomes estimation problem under a linear interference model.

Though there is extensive literature on system identification, only recently have researchers determined statistical error rates for the estimation of the sequence $g_k$ from a single input sequence $x_k$. There are numerous reasons for this. First, since system identification of engineered systems is typically done in a controlled setting, sample complexity is not typically a constraint on engineers. Second, the statistical methods for system identification lean on random matrix results that have only been recently derived.

The analysis of \cite{oymak2019non} shows that the parameters $g_k$ can be estimated via least squares when $x_t$ and $e_t$ are zero-mean Gaussian random variables and $s_1=0$. They further assume that $g_k$ obeys a decay condition that $|g_k| \leq C a^k$ for a constant $C$ and $a<1$. They show that the rate of error in the estimate $\hat{g}_k$ satisfies
$$
	\sum_{k=1}^\infty |g_k- \hat{g}_k|^2 = O\left(\tfrac{1}{(1-a)^2 T}\right)
$$
where $T$ is the number of observed samples.

Later work by \cite{Simchowitz19} removed the dependence on $a$ in exchange for another property of $g$ called \emph{phase rank}, but some restriction on the form of $g$ is always required to estimate $g$. Similarly, work by \cite{bakshi2023} provides tighter bounds on this estimation error, but assumes the error signal is random and the input signal is zero mean. Moreover, the current system identification literature does not provide provable means of data-driven uncertainty quantification, a crucial component of causal inference.

\paragraph{A synthesis of causal inference and system identification} The work in this paper builds on these three pillars to synthesize an inference framework for N-of-1 experiments. First, as we develop in the next section, we focus on the linear time-invariant outcome model studied in Granger-Sims causality, and in linear system identification
\begin{align}\label{eq:basic-model}
	y_t = \sum_{s=0}^{t} x_{s} g_{t-s} + e_t\,.
\end{align}

This model generalizes the Neyman-Rubin framework to allow for a particular functional form of linear interference that is inspired by linear dynamical systems. In the next section, we develop a generalization of the Horvitz-Thompson estimator for 
linear estimands of the form
\begin{align*}
	\tau(q) := \langle g, q \rangle \,.
\end{align*}
In the case that $q$ is the first standard basis vector, the generalized estimator will precisely be the Horvitz-Thompson estimator.

Unlike in the Granger-Sims and linear system identification frameworks, we do not assume the error signal $e_t$ is random in~\eqref{eq:basic-model}.  We will focus on estimating various linear functionals of the coefficients $g \in \mathbb{R}^T$. Focusing on linear estimands rather than the coefficients themselves should simplify the problem. We note, however, that results for this simplification cannot be derived from the prior art.

Additionally, in this work, we are interested in estimating $g_k$ when $e_t$ is a non-random signal. 
\cite{Simchowitz19} was the first to study estimation of $g$ when $e_t$ was non-random, and estimation under non-random conditions has been used in the study of online control~\citep{hazan2020nonstochastic}. However, none of these works provided specific forms of the error, nor did they quantify the variance sufficiently accurately for the construction of asymptotically-sharp confidence intervals.
 
However, the explicit form of the error is not provided. There are several practically relevant situations where such non-random noise arises. For instance, suppose we restrict our attention to input sequences that are constrained to be nonnegative. This departs from the zero-mean inputs in Granger-Sims and system identification. Non-zero-mean inputs can be modeled as zero-mean inputs plus a new, non-random error term.

Unlike system identification, our work is interested in estimating confidence intervals on linear functionals of $g_k$ from finite experiments. None of the prior work has investigated such uncertainty quantification. In particular, we derive novel asymptotic normality guarantees for our estimator, which is closely related to the least-squares estimator of \cite{oymak2019non} and the method-of-moments estimator of \cite{bakshi2023}. The proof of asymptotic normality is more delicate than what is typical of Horvitz-Thompson estimators. In particular, the asymptotic normality requires an intricate calculation balancing of eighth moments of the intervention variables.

\section{Treatment Effects with Temporal Interference: Estimands and Estimators}
\label{sec:linear-conv}

\subsection{Convolutions}
\label{sec:convolution}
The outcomes in linear time-invariant models can be represented as a convolution between an intervention sequence and a parameter vector known as an \emph{impulse response}. This impulse response also quantifies the temporal interference. Before expanding on our statistical theory, we first review the basic notation and theory of convolution. This notation will simplify many of the formulas, and eliminate the need to track subscripts.

Let $u_t$ and $v_t$ be two signals with indices in the nonnegative integers. The convolution of $u_t$ and $v_t$ is the signal
$$
	(u * v)_t = \sum_{s=0}^t u_s v_{t-s}\;.
$$
Note how this expression allows us to compactly rewrite the models of Granger-Sims causality and linear time-invariant systems.

Convolution is linear and commutative, namely $u * v = v * u$. We can write the operator that takes $v$ to $u * v$ in matrix form
$$
	 u*v = \mathcal{T}_u v
$$
where $\mathcal{T}_u$ is the Toeplitz matrix
$$
	\mathcal{T}_u = \begin{bmatrix}
		u_0 & 0 & 0 & 0 & \ldots & 0\\
		u_1 & u_0 & 0 & 0 & \ldots & 0\\
		u_2 & u_1 & u_0 & 0 & \ldots & 0\\
		\vdots & \vdots & \vdots & \ddots &\vdots & \vdots\\
		u_{T-1} & u_{T-2} & u_{T-3} & \ldots & \ldots & u_0
	\end{bmatrix}\,.
$$
Note that we must have $\mathcal{T}_u v = \mathcal{T}_v u$.

The adjoint operator of $\mathcal{T}$---that satisfies $\langle \mathcal{T}^*(M), u \rangle = \langle M, \mathcal{T}_u \rangle$ for all $M, u$---maps matrices to vectors with components
$$
	\mathcal{T}^*(M)_s = \sum_{t = s}^{T-1} M_{t,t-s}, ~ \forall s = 0, \ldots, T-1 \;.
$$
Moreover, we have the composition $\mathcal{T}^*(\mathcal{T}_u)_s =  (T-s) u_s$.

In our theoretical analysis, we use \emph{circular extensions} of signals that result in tractable closed-form expressions for many of the moments in our experiment designs. 
For two signals $u$ and $v$ of length $T$, the \emph{circular convolution} is defined as
$$
	(u \ast v)_t = \sum_{s=0}^{T-1} u_s v_{t-s\pmod{T}} \;,
$$
where the circular extention notation $\pmod{T}$ means that for $s\in [0, t], t-s\pmod{T} = t-s$ and for $s\in (t, T-1], t-s\pmod{T} = T+t-s$.
Note here that the summation extends over all possible indices $s$, not just those that have values less than or equal to $t$. In matrix form, circular convolution corresponds to multiplication by a circulant matrix
$$
	 u \ast v = \mathcal{C}_u v
$$
where $\mathcal{C}_u$ is the matrix
$$
	\mathcal{C}_u = \begin{bmatrix}
		u_0 & u_{T-1} & u_{T-2} & u_{T-3} & \ldots & u_1\\
		u_1 & u_0 & u_{T-1} & u_{T-2} & \ldots & u_2\\
		u_2 & u_1 & u_0 & u_{T-1} & \ldots & u_3\\
		\vdots & \vdots & \vdots & \ddots &\vdots & \vdots\\
		u_{T-1} & u_{T-2} & u_{T-3} & \ldots & \ldots & u_0
	\end{bmatrix}\,.
$$
For a variety of algebraic and Fourier analytical reasons, computation with circular convolutions is more elegant than with linear convolutions. For example, the Discrete Fourier Transform of a circular convolution is the product of the Discrete Fourier Transforms of the individual signals. Such properties will be used in depth in the theoretical analysis.

For the ease of writing subscripts in equations, we introduce the circular extension notation $v^{\circ}_t = v_{t\pmod{T}}$ for possible negative integer $t$. With this notation, 
\begin{align}
	\label{eqn:circ-abbrev}
	( u \ast v )_t &= \sum_{s=0}^{T-1} u_s v^{\circ}_{t-s} \;. 
\end{align}

\subsection{Estimands}

Suppose we have a scalar outcome sequence $y_t$ that depends on a scalar treatment $x_t$ via a linear dynamical system
\begin{equation}\label{eq:lti-dynamics}
	y_t = \sum_{s=0}^{t} g_s x_{t-s} + e_t = (g * x)_t + e_t\,.
\end{equation}
Here $g$ is the \emph{impulse response} of the linear system. For the simplicity of this paper, we only consider that the $x_t$'s are binary-valued, namely Bernoulli N-of-1 experiments, but we expect extensions to real-valued inputs to be relatively straightforward.

$e_t$ is an exogenous signal that is not affected by the treatment $x_t$. But we are interested in analyzing $e_t$ that are far from random. As a motivating example, suppose two treatments have treatment effects $g^{(A)}$ and $g^{(B)}$. We are interested in which leads to overall better outcomes. In this case, $g = g^{(A)}-g^{(B)}$ and $e=g^{(B)}$. These signals are tightly correlated. Hence, we model $e$ as an \emph{adversarial error} oblivious to the randomization $x$. The error could be non-i.i.d., non-stationary, and even systematic, as long as it is oblivious to input $x$. 

Which outcome properties might we be interested in? One example is to study the \emph{average treatment effect}, which compares outcomes when $x_t$ is equal to either all ones or all zeros, denoted by $\mathbf{1}$ and $\mathbf{0}$ respectively:
\begin{align*}
	\tau_{\mathrm{ATE}} = \tfrac{1}{T} \sum_{t=0}^{T-1} y_t(\mathbf{1})-y_t(\mathbf{0})
\end{align*}
Using~\eq{lti-dynamics}, we can write $\tau_{\mathrm{ATE}}$ in terms of the impulse response
\begin{align*}
 \tau_{\mathrm{ATE}} = \sum_{t=0}^{T-1}  \tfrac{T-t}{T} g_{t}\,.
\end{align*}
Hence, for full generality, we will study linear estimands of the form $\tau^{\mathcal{L}}(q) = \langle D_T q, g \rangle$ where $q$ is a vector and $D_T$ is the diagonal matrix with $t$-th entry $\frac{T-t}{T}$. With this definition, $\tau_{\mathrm{ATE}}=	\tau^{\mathcal{L}}(q)$ when $q=\mathbf{1}$. These different estimands index by $q$ can highlight varied intervention effects, and we shall see a few other examples in Section~\ref{sec:example-utility}.

In a standard N-of-1 trial, the estimand would compare the effects of treatments at a time resolution of each dosing period. That is, if a particular treatment is given for a week at a time, the outcomes will be measured at the resolution of each week. Standard N-of-1 designs also use long periods to ensure that the interference between the effects of the different treatments have washed out before an outcome is measured. If a treatment has spillover, then measurements early in a treatment period will be influenced by the prior period's treatment.

But what if we are interested in the average treatment effect at the resolution of a day?
Let us consider for example the following two treatments with a two-day period:
\begin{align*}
	&y_{1}^{(A)} = x_1\,, ~~~	y_{2}^{(A)} = x_1 + x_2 \;,  \\
	&\text{and} ~~~ y_{1}^{(B)} = 0 \,, ~~~y_{2}^{(B)} = 2 x_1\,.
\end{align*}
Both of these treatments correspond to linear time-invariant systems with $g^{(A)} = (1,1)$ and $g^{(B)} = (0,2)$. For both treatments, the outcome if the treatment is taken for a two-day period is $2$. Therefore, if the dosing period is two days, Treatment A and Treatment B are indistinguishable in standard N-of-1 trials. However, the average outcome for Treatment A is $1.5$, and the average outcome for Treatment B is $1$. If, for example, these treatments are pain medications, then Treatment A is preferable to Treatment B as the immediate relief is valuable. We thus aim to allow for a variety of linear estimands to capture the effect most aligned with preferred outcomes.

\subsection{Estimation from Random Design}

Consider the randomized experiment where we assign the input at time $t$ to be zero or one with equal probability, independently at each $t \in [T]$. How can we estimate the treatment effect $\tau^{\mathcal{L}}(q)$ from the observed outcomes of an experiment? We care about the linear functionals $\tau^{\mathcal{L}}(q)$ as estimands as they compare any two counterfactual treatment paths as a special case. For $q, q' \in \{ 0, 1 \}^T$ on the hypercube, $\tau^{\mathcal{L}}(q) -\tau^{\mathcal{L}}(q')$ evaluates the difference between any two counterfactual paths, an exemplar question in causal inference.

We now fix the notation for the remainder of the document to distinguish between random and nonrandom signals. We will use boldface font for random vectors to emphasize the stochastic nature of experimentation. In what follows, we use the normal font, say $x \in \mathbb{R}^{T}$ and $H \in \mathbb{R}^{T \times T}$ to denote deterministic vectors and matrices, $x_i, H_{ij}$ to denote their corresponding entries. We use the boldface, say $\bx, \by, \bz, \bW$, to denote vectors that are random due to experimental design. The lone exception is that we use $\mathbf{1}$ to denote the all ones vector and avoid confusion with the scalar $1$. When there is no ambiguity, we use the abbreviation $[T] := \{ 0, 1, \ldots, T-1\}$ for simplicity of the time index.

With our fixed notation, the linear convolution model is written as
\begin{align}
	\label{eqn:linear-conv-model}
	\by = \bx * g + e \in \mathbb{R}^T \;,
\end{align}
and $g, \bx, \bx * g$ and $e$ are all of length $T$ with $\bx_t \in \{0,1\}, t\in [T]$. 

A natural generalization of the Horvitz-Thompson estimator is the method-of-moments estimator, defined as
\begin{align}\label{eq:q-estimator}
	\widehat{\tau}^{\mathcal{L}}(q) = \tfrac{1}{T} \langle (2\bx-\mathbf{1})*q, 2 \by \rangle\,.
\end{align}
This choice is natural in the following sense. Suppose we restrict our attention to unbiased linear estimators $\langle w, \by \rangle$. Then unbiasedness is the linear constraint
$
	\E_{\bx}[\langle w, \by \rangle] = \langle D_T q,g \rangle .
$
Setting $w$ as the Moore-Penrose pseudoinverse of this system of equations yields the estimator~\eq{q-estimator}.

As we will do several times in what follows, we note that $\widehat{\tau}^{\mathcal{L}}(q)$ is a quadratic function of $T$ independent Rademacher random variables. Throughout we will need to compute moments of such Rademacher chaos, and we provide explicit formulas for their first, second, and fourth moments in Appendix~\ref{app:chaos}.

Indeed,  $\bz=2\bx-\mathbf{1}$ is a vector of independent Rademacher random variables and we have that
\begin{align*}
	\widehat{\tau}^{\mathcal{L}}(q) &= \tfrac{1}{T} \langle \bz * q, 2(\bx * g + e) \rangle
	= \tfrac{1}{T} \langle \mathcal{T}_q \bz, 2\mathcal{T}_g \bx+ 2e\rangle 
	= \tfrac{1}{T} \left\langle \bz,   \mathcal{T}_q^\transp \mathcal{T}_g \bz\right\rangle
	+\tfrac{1}{T}  \left\langle \bz, \mathcal{T}_q^\transp\mathcal{T}_g \mathbf{1} + 2\mathcal{T}_q^\transp e\right\rangle \;.
\end{align*}
From this calculation we can verify that $\widehat{\tau}^{\mathcal{L}}(q)$ is indeed unbiased as
$$
	\E[\widehat{\tau}^{\mathcal{L}}(q)] =\tfrac{1}{T}  \trace\left( \mathcal{T}_q^\transp \mathcal{T}_g \right) = \sum_{t \in [T]} \tfrac{T-t}{T}  q_t g_t = \langle D_T q, g \rangle \;.
$$

Moreover, the variance of $\hat{\tau}_q$ is given by
\begin{equation}\label{eq:mom-variance}
	 T^2 \operatorname{var}(\widehat{\tau}^{\mathcal{L}}(q)) = 
	\|\mathcal{T}_q^\transp \mathcal{T}_g\|_F^2 + \trace(\mathcal{T}_q^\transp \mathcal{T}_g\mathcal{T}_q^\transp \mathcal{T}_g) - 2 
	\sum_{t\in [T]} (\mathcal{T}_q^\transp \mathcal{T}_g)_{tt}^2 +  \| \mathcal{T}_q^\transp (g * \mathbf{1}+2 e)\|^2	 \;.
\end{equation}
Though this variance formula can be computed from the problem data, it is not wieldy or illuminating. In the later sections, we will compute an approximate variance formula that provides more insights into the dependence of the variance on $g$, $q$, and $e$.

\subsection{Example Utility of N-of-1 Design}
\label{sec:example-utility}
The simple variance calculations thus far have already shed some light on the utility of rapid switching in N-of-1 designs. In this section, we describe a synthetic example illustrating how rapid switching can lead to approximately the same inference time, but lead to better patient outcomes and provide more information than standard experiment designs.

Suppose a patient is trying to choose between two drugs, Drug A and Drug B. If one wants to use a classical treatment effect estimator, such as the Horvitz-Thompson estimator, the period at which the treatment is switched must be carefully chosen. We have to ensure that the effects of one drug don't interfere with the other in order to get a precise estimate of the treatment effect. In our N-of-1 designs, the drugs can be interleaved more rapidly. The benefit of such an approach is that if there are decaying effects, the patient will experience the average response of the two treatments. But this averaging may come at the cost of running a longer experiment. Here, we showcase a toy example illustrating how the expected time required to distinguish two treatments is approximately the same. Yet, the average outcome is better in our N-of-1 design with rapid interleavings.

Suppose that Drug A and Drug B have impulse responses
$$
	g_t^{(A)} = A g_t, \quad  g_t^{(B)} = B g_t \;.
$$
Let $A$ and $B$ be positive scalars and, without loss of generality, $A\geq B$. As an illustrative example, suppose that $g_t = \beta^t$ has a simple exponential decay. Suppose $t$ indexes days and $\beta=0.5$. This means that the treatment has a half-life of one day. The half-life of the drugs determines the switching time. For an N-of-1 experiment that requires the effect to clear, this means you should wait 4-5 days between potentially switching treatments. Let us choose $\ell = 5$ to wash out interference.

Consider three simple estimands, all special cases of the general estimands studied in this paper
\begin{align*}
	\textit{immediate effect}:~ &\tau^{\textrm{imd}}:= g_0^{(A)} -g_0^{(B)} \;, \\
	\textit{cumulative effect}:~ &\tau^{\textrm{cum}}:= \big(g_0^{(A)} + g_1^{(A)} \big) - \big(g_0^{(B)} + g_1^{(B)} \big)\;, \\
	\textit{flip effect}:~ &\tau^{\textrm{flp}}:= \big(g_0^{(A)} + g_1^{(B)} \big) - \big(g_0^{(B)} + g_1^{(A)} \big) \;.
\end{align*}
The \textit{immediate effect} compares treatment paths $(A, -)$ vs. $(B, -)$, which equals $(A-B)g_0$; the \textit{cumulative effect} compares paths $(A, A, -)$ vs. $(B, B, -)$, namely $(A-B)(g_0 + g_1)$; the \textit{flip effect} compares paths $(B, A, -)$ vs. $(A, B, -)$, that is $(A-B)(g_0 - g_1)$. These three estimands are special cases of the general estimands $\tau(q)$ we study, with $q = (1, 0,\ldots, 0)$ for the \textit{immediate effect}, $q = (1, 1, 0, \ldots, 0)$ for the \textit{cumulative effect}, and $\tau(q) - \tau(q')$ for the \textit{flip effect} where $q=(0, 1, 0, \ldots, 0)$ and $q' = (1, 0, 0, \ldots, 0)$.

Let us compare three N-of-1 designs: (1) \textit{Standard Treatment imd}, a standard experiment design where we treat and measure the immediate outcome, wait $\ell$ days for the transient dynamics to end, and then potentially switch the treatment. We choose $\ell = 5$ in the numerical comparison. We note this design is specifically for measuring the \textit{immediate effect}.  (2) \textit{Standard Treatment cum}, another standard N-of-1 design where we treat either Drug A or B for a consecutive period of 2 days, then measure the cumulative outcome after day two. Like before, we take a gap of $\ell$ days for transient dynamics to end before the switch. This design is suited for measuring the \textit{cumulative effect}. (3) \textit{Our Rapid Interleaving}, our interleaving design where we treat daily, measure the effect, and switch the treatment randomly. Figure~\ref{fig:A_vs_B_sample_path} illustrates the impulse response curves for Drugs A and B, as well as a sample path of treatment sequence and response for three designs.

\begin{figure}[!htbp]
    \caption{\scriptsize From top to bottom row: \textit{Standard Treatment imd}, \textit{Standard Treatment cum}, and \textit{Our Rapid Interleaving}.}
    \label{fig:A_vs_B_compare_designs}
     \centering
     \begin{subfigure}[t]{0.49\textwidth}
         \centering
         \includegraphics[height=4.9cm]{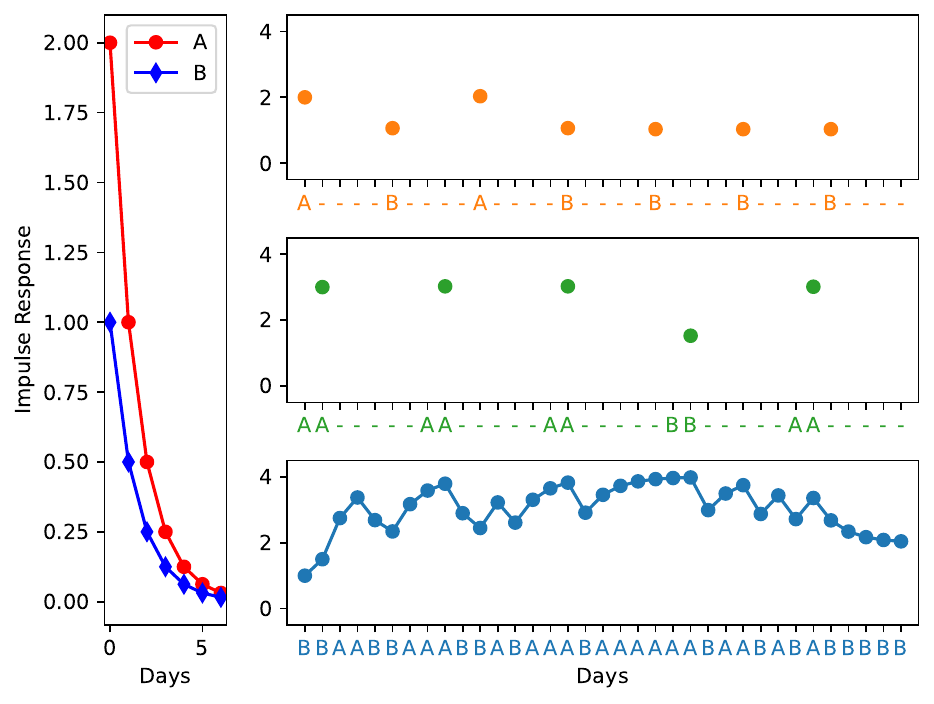}
         \caption{\scriptsize Sample path of treatments (x-tick labels) and responses (y-axis). Each row corresponds to a treatment plan; from top to bottom, each row corresponds to \textit{Standard Treatment imd}, \textit{Standard Treatment cum}, and \textit{Our Rapid Interleaving}.}
         \label{fig:A_vs_B_sample_path}
     \end{subfigure}%
     \hfill
     \begin{subfigure}[t]{0.49\textwidth}
         \centering
         \includegraphics[height=4.9cm]{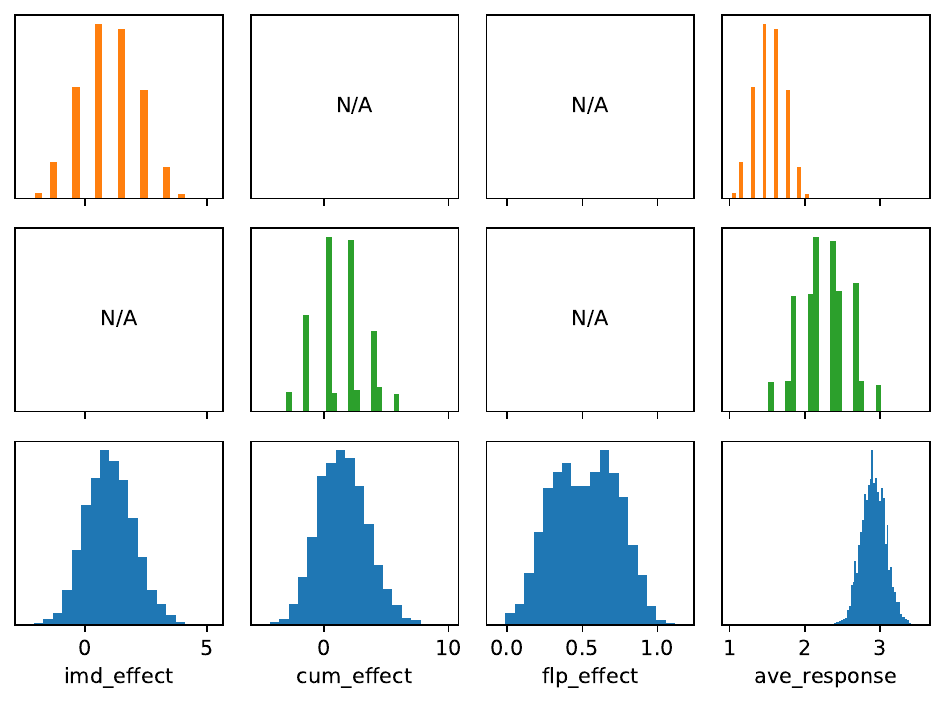}
         \caption{\scriptsize Histogram across simulations: the left three columns plots $\widehat{\tau}$, for \textit{immediate}, \textit{cumulative}, and \textit{flip} effects; the last column plots average response $\tfrac{1}{T} \sum_t y_t$. A histogram will be left as ``N/A'' when the specific design cannot be used to estimate a certain effect.}
         \label{fig:A_vs_B_snr}
     \end{subfigure}
\end{figure}

\begin{table}[!htbp]
	\caption{Example: $A, B = 2, 1$ and $g_t = 0.5^t$ for a period of $T = 35$ days. Each column, row, and cell correspond to the same simulations as in Figure~\ref{fig:A_vs_B_snr}.}
	\label{tab:A_vs_B_compare_designs}
	\centering
\resizebox{\textwidth}{!}{
\begin{tabular}{l c | c c | c c | c c | c}  
\toprule
N-of-1 Design	& Any $\tau(q)$?  & \multicolumn{2}{c|}{\textit{immediate effect}} & \multicolumn{2}{c|}{\textit{cumulative effect}} & \multicolumn{2}{c|}{\textit{flip effect}} & \textit{average response} \\
	&   & \multicolumn{2}{c|}{$\tau^{\textrm{imd}}$ = 1} & \multicolumn{2}{c|}{$\tau^{\textrm{cum}}$ = 1.5} & \multicolumn{2}{c|}{$\tau^{\textrm{flip}}$ = 0.5} & \\
\cmidrule(r){3-4} \cmidrule(r){5-6} \cmidrule{7-8}
& & $\text{ave}(\widehat{\tau})$&$\text{snr}(\widehat{\tau})$ & $\text{ave}(\widehat{\tau})$&$\text{snr}(\widehat{\tau})$ & $\text{ave}(\widehat{\tau})$&$\text{snr}(\widehat{\tau})$ & \\
\midrule
\textit{Standard Treatment imd}   & NO    &  0.969&\textbf{0.833}    &  \multicolumn{2}{c|}{N/A} &  \multicolumn{2}{c|}{N/A} & \textbf{1.536} \\
\midrule
\textit{Standard Treatment cum}   & NO    &  \multicolumn{2}{c|}{N/A} &  1.525&\textbf{0.751}    &  \multicolumn{2}{c|}{N/A} & \textbf{2.268}  \\
\midrule
\textit{Our Rapid Interleaving}      & YES   &  1.017&\textbf{1.032}    &  1.516&\textbf{0.770}    &  0.517&\textbf{2.390}    & \textbf{2.918}  \\
\bottomrule
\end{tabular}
}
\end{table}

We compare a few aspects of these three N-of-1 designs, summarized in Table~\ref{tab:A_vs_B_compare_designs} and Figure~\ref{fig:A_vs_B_compare_designs} below. First and the utmost, \textit{Our Rapid Interleaving} allows for comparing any counterfactual treatment paths combining Drugs A and B, that is, $\tau(q) - \tau(q'), q, q' \in \{ 0, 1\}^T$. The \textit{flip effect} serves as a running example; the two \textit{Standard Treatments} cannot be used to evaluate such \textit{flip effect}. As seen in the last row of Table~\ref{tab:A_vs_B_compare_designs} and Figure~\ref{fig:A_vs_B_snr}, our design can estimate all three estimands, whereas the other two standard designs are tailored for a specific estimand and cannot be used to estimate other estimands (shown as `N/A' cell in Table~\ref{tab:A_vs_B_compare_designs} and Figure~\ref{fig:A_vs_B_compare_designs}).

We run 5000 Monte Carlo simulations with $A = 2, B = 1$, and $g_t = 0.5^t$, each with an experiment period of $T = 35$ days, or 5 weeks. We numerically calculate the average and standard deviation for our convolutional estimator and the standard Horvitz-Thompson estimator. We compare their signal-to-noise ratios, defined as $\mathrm{SNR} = \mathrm{ave}(\widehat{\tau})/\mathrm{std}(\widehat{\tau})$, reported in Table~\ref{tab:A_vs_B_compare_designs} (as numerical values) and Figure~\ref{fig:A_vs_B_snr} (as histograms). Again, unlike the other designs, \textit{Our Rapid Interleaving} can estimate all three estimands with better accuracy and higher SNR.

In summary, the rapid interleaving design handles temporal interference, requires roughly similar time for testing which drug is better, and provides patients with better average outcomes during the experiment.

The above numerical example is supported by our variance formula~\eq{mom-variance}. Consider $\tau^{\textrm{imd}}$ as an example, which corresponds to $\tau(q)$ with $q_0=1$ and all other entries equalling 0. Recall the SNR
$$	
	\mathrm{SNR} = \E[\widehat{\tau}(q)]/\sqrt{\operatorname{var}(\widehat{\tau}(q))} \;.
$$
Suppose we observe an observation length $T$. Using the variance formula~\eq{mom-variance}, we have
\begin{align*}
	\operatorname{var}(\widehat{\tau}^{\textrm{imd}}) =  \frac{(A+B)^2}{T} \left\{ \left(\frac{A-B}{A+B}\right)^2 v(g) + w(g) \right\} \;, ~\text{where}~
	v(g) = \sum_{t=1}^{T-1} \tfrac{T-t}{T} g_t^2 \;, ~ w(g)= \tfrac{1}{T}||g * 1||_2^2 \;.
\end{align*}
Note that both quantities are bounded when the values of $g$ are summable. Thus, even with rapid interleaving of treatments, the SNR tends to infinity at a rate of $\sqrt{T}$.

For the conservative design where we wait $\ell$ days between treatments, the SNR with random treatment assignments is
$$
	\frac{A-B}{A+B} \sqrt{T} \ell^{-1/2}\,.
$$
We can compare the SNRs of the rapid design to the conservative design for $g_t = \beta^t$ with $\beta = 0.5$. As in the above example, let us choose $\ell = 5$ to wash out interference, setting $\ell^{-1/2}$ to be $0.44$. Hence, the SNR of the standard Horvitz-Thompson estimator is
$$
	0.44 \frac{A-B}{A+B} \sqrt{T} \,,
$$
where $T$ is the total number of days the experiment is run.

On the other hand, for the rapid interleaving design, using our convolutional estimator, the SNR for $\widehat{\tau}^{\textrm{imd}}$ would be at least
$$
	0.48 \frac{A-B}{A+B} \sqrt{T} \,.
$$
This bound is estimated by $v(g) \leq \tfrac{\beta^2}{1-\beta^2} \leq 0.34$ and $w(g) \leq \tfrac{1}{(1-\beta)^2} = 4$. This lower bound on the SNR is about 1.1 times that required for a standard design of the same duration, regardless of $A, B$. This shows that rapid interleaving of treatments may even accelerate the time required to assess the difference between treatments with decaying effects.

\subsection{Plug-in Variance Estimators}\label{sec:plugin-linear}

To construct confidence intervals using a normal approximation, we need to establish data-driven estimators for the variance of the estimator. Our approach is to provide direct, coarse estimates of the parameter vector $g$ and the error signal $e$, and then to plug these values into the variance formula~\eq{mom-variance}.

Within this section, we assume $q$ is only nonzero in the first $K$ coordinates. As is evident from the variance formula~\eqref{eq:mom-variance}, the variance of the estimator is potentially small only when the function $q$ is only nonzero in the first few components. We will describe precisely how small $K$ needs to be in more detail in Section~\ref{sec:circ_theory}.

For any integer $k$, let $u_k$ denote the $k$th standard basis vector. Then $\widehat{\tau}(u_k)$ is an unbiased estimator of $g_k$. Consider the following truncated estimator of $g$:
\begin{equation*}
	\widehat{g}_{< K} = (\widehat{\tau}(u_0),\widehat{\tau}(u_1),\ldots, \widehat{\tau}(u_{K-1}),0,\ldots,0) \,.
\end{equation*}
This estimator is unbiased in the first $K$ components but approximates the remaining components as equal to zero. We can use this estimate of $g$ to estimate the error signal $e$ as well
\begin{align*}
	\widehat{e}_t := \by_t - (\bx * \widehat{g}_{<K})_t \;,
\end{align*}

With these two estimates, we can estimate the variance. We simply plug these estimates into the variance formula~\eq{mom-variance}:
\begin{equation}\label{eq:mom-variance-plugin}
	\widehat{\cV} := 	\|\mathcal{T}_q^\transp \mathcal{T}_{\widehat{g}_{<K}}\|_F^2 
		+ \trace(\mathcal{T}_q^\transp \mathcal{T}_{\widehat{g}_{<2K}} \mathcal{T}_q^\transp \mathcal{T}_{\widehat{g}_{<2K}})
		- 2 \sum_{t \in [T]} (\mathcal{T}_q^\transp \mathcal{T}_{\widehat{g}_{<K}})_{tt}^2 
		+  \| \mathcal{T}_q^\transp \left( \widehat{g}_{<K} * \mathbf{1}+2 \widehat{e} \right)\|^2 \,.
\end{equation}
Note that in the second term, we use a $2K$-length estimate of $g$, but all other terms only use $K$ terms. The validity of this plug-in variance estimate will become transparent in the proof of Proposition~\ref{prop:plug-in-estimate}. In the subsequent theory section, we provide evidence for this estimator being asymptotically normal, and hence yielding valid normal confidence intervals.

\subsection{Connection to Existing Approaches}
\label{sec:comparison}

Finally, we close this section by discussing the existing approaches to dynamic causal inference and positioning our work in this growing literature.

\paragraph{Dynamic treatments}
To study observational longitudinal data \citep{robins_new_1986}, Robins and colleagues \citep{robins1999estimation, robins2000marginal, robins2004optimal} introduced structural nested models and designed fairly general identification and estimation methods for dynamic treatment effects. With a target expected response in mind, optimal sequential treatment rules have also been studied \citep{murphy2003optimal, robins2004optimal}. The framework to study dynamic treatments is generic; it considers various settings where the current treatment can interact with past treatments, outcomes, and covariates \citep{robins2000marginal}. However, in practice, some parsimonious models for the treatment-response relationship and models for how the current treatment interacts with the past are needed for meaningful identification and estimation, see \cite[Section 7, Equations 12-14]{robins2000marginal} and \cite{robins1999estimation}. 

Our work here considers a considerably narrower model for a specific, restricted class of dynamic treatments. In exchange for this narrowness, we are able to provide computable estimators for interesting estimands with reasonable variance in our settings. We consider a setup used in switchback experiments \citep{bojinov2022DesignAnalysis} and postulate a linear interference model for temporal effects drawing from linear dynamical systems. The temporal effects are also called impulse response functions \citep{granger1969investigating,sims1972money}. We study randomization inference of the impulse response functions.

\paragraph{Time series experiments}
\cite{bojinov2019time, bojinov2022DesignAnalysis} studied time series experiments and inference of temporal causal effects. The authors consider general causal effects indexed by any two treatment paths, and study estimation of the causal effects by inverse probability weighting according to each treatment path. Like the general interference model in Equation~\eqref{eqn:discrete-fourier}, \cite{bojinov2019time} do not pose any parsimonious structure on the response. Recall the fact that only 1 out of $2^T$ treatment paths is observed; In general, there is no hope of estimating the effect of an arbitrary counterfactual path. Therefore, for a fixed time $k$, the authors make the estimand depend on the observed path up to time $T-k$ for valid estimation \cite[Sections 3.2 and 3.3]{bojinov2019time}. This eliminates a large number of counterfactual paths from the analysis. Building on top of their prior work, \cite{bojinov2022DesignAnalysis} explored the minimax experimental design of regular switchback experiments. 

The experimental estimands in \cite{bojinov2019time} and \cite{bojinov2022DesignAnalysis} focus on average short-term effects, whereas our estimand considers any linear functional of the long-term effects. Our estimation method can evaluate any counterfactual path because we impose a linear interference model, thus introducing a parsimonious structure on the response. The salient distinction from the present work is the focus on general response, short-term effects, versus linear response, long-term effects. \cite{hu2022switchback} also emphasized settings where temporal effects do not entirely vanish after a short term. As for confidence intervals for inference, \cite{bojinov2019time} only considered the sharp null where there are no temporal effects, whereas our inference result holds for any temporal effects with sufficient decay. In summary, our papers complement each other and point to the potential of a general inference framework in N-of-1 trials.

\paragraph{Markov decision processes}
Markov decision processs (MDP) is another choice to model the temporal interference in dynamic causal inference. Assuming full or partial knowledge of the state of the system, one can estimate the MDP \citep{glynn2020adaptive} from data, and further estimate treatment effects, see \cite{liao2021off, liao2022batch}, \cite{kallus2020double, kallus2022efficiently}, and \cite{hu2023off}. 

The model we consider in this work is a \emph{partially observed} Markov decision process with continuous actions, states, and outputs. This makes our work considerably different in character than prior work on MDPs which have observed, discrete actions, states, and outputs. We are able to move to the partially observed setting because we restrict our attention to linear models. In these models, we make no assumption about the dimension of the hidden state, which could be infinite. Our approach complements the MDP approach, taking advantage of the special properties of linear dynamical systems when modeling temporal interference.

\paragraph{Panel data methods}
Though not the subject of this paper, causal inference with panel data \citep{arkhangelsky2023large, bojinov2021panel} has received broad interest in applied econometrics, where emphases are made on heterogeneity across units. We refer the readers to the extensive survey by \cite{arkhangelsky2023causal}. There are two significant differences between N-of-1 studies and panel methods. First, in panel data methods, the assignment or treatment mechanisms are typically structured, for instance, block assignment, staggered adoption, and clustered assignment \cite[Section 3.3]{arkhangelsky2023causal}. In contrast, N-of-1 trials vary treatments over time and focus on temporal interference. Second, panel data methods typically model the heterogeneity and form estimands in an average sense. In contrast, we run time series experiments for each unit and study randomization inference of individual impulse responses.

\section{Circular Convolution Model and Asymptotic Normality} 
\label{sec:circ_theory}

In this section, we first develop the circular convolution model, providing the analogous estimand and estimator to the linear case. We show that for large $T$, the estimators for the linear and circular models differ by a vanishing amount. We then establish asymptotic normality for the circular model, which in turn verifies the asymptotic normality of the linear design as well. As we shall see, the variance formula for the circular case takes an interpretable form compared to the linear case.

Consider a circular convolution model for the impulse response with horizon size $T \in \mathbb{N}$, 
\begin{align}
	\label{eqn:circular-conv-model}
	\by = \bx \ast g + e \in \mathbb{R}^T \;. 
\end{align}
How does this differ from linear convolution? As we described in Section~\ref{sec:convolution}, circular convolution models $g$ and $e$ as periodic signals, and such models do not describe the sorts of N-of-1 experiments typically considered in the medical context. However, on a long time horizon, when $g$ decays quickly, circular convolutions and linear convolutions produce similar outputs. As we will see, shortly, circular convolutions will provide insights into the asymptotic behavior of linear convolution experiment designs, provided that the observation time is long enough.

Moreover, circular convolutions are interesting in their own right for modeling periodic phenomena. For instance, periodic behavior due to seasonality and other cyclic trends in financial times series (see, for example, \cite{box1978analysis} and Chapter 2.8 of \cite{tsay2005analysis}). Hence, we include full details for experiment design in such periodic systems.

\subsection{Estimand and Estimator}
We are concerned with a family of estimands indexed by $q \in \R^T$ 
\begin{align}
	\label{eqn:estimand}
	\tau(q) := \langle q, g \rangle \;.
\end{align}
Special cases include the \textit{cumulative lag-$K$ effects} with an integer $K$, where vector $q = \mathbf{1}_{<K} := \sum_{k=0}^{K-1} u_k$ is plugged in, 
\begin{align}
	\label{eqn:lag-K}
	\tau_{K} := \tau(\mathbf{1}_{<K}) = \tau(\sum_{k=0}^{K-1} u_k) = \sum_{k=0}^{K-1} g_k \;.
\end{align}
Again, here $u_k$ is the $k$-th standard basis vector.

Accordingly, consider estimators of the following form, indexed by $q \in \R^T$
\begin{align}
	\label{eqn:estimator} 
	\widehat{\tau}(q)= \tfrac{1}{T}\langle (2\bx-\mathbf{1}) \ast q , 2 \by \rangle \;.
\end{align}
As a special case, the estimator for the lag-$K$ effects $\tau_{K}$ is
\begin{align}
	\label{eqn:estimator-lag-K}
	\widehat{\tau}_K := \widehat{\tau}(\mathbf{1}_{<K}) = \tfrac{1}{T}\langle (2\bx-\mathbf{1}) \ast \mathbf{1}_{<K} , 2 \by \rangle \;.
\end{align}

For the linear convolution model, the estimator takes the same form as the \eqref{eqn:estimator}, only changing the circular convolution to linear convolution,
\begin{align}
	\label{eqn:estimator-lin}
	\widehat{\tau}^{\mathcal{L}}(q)= \tfrac{1}{T}\langle (2\bx-\mathbf{1}) * q , 2 \by \rangle \;.
\end{align}
In this case, the estimand turns out to be slightly different than that in \eqref{eqn:estimand},
\begin{align}
	\label{eqn:estimand-lin}
	\tau^{\mathcal{L}}(q) :=  \sum_{t \in [T]} \tfrac{T-t}{T} q_t g_t \;.
\end{align}
However, one can already see that if $q_t$ is $0$ for $t\geq K$, then the linear and circular estimands approach each other as $T$ tends to infinity.

Let us now consider the same simple design as before, when $\bx \in \mathbb{R}^T$ is randomly treated at each time step. That is, $\bx_t, 0\leq t <T$ is i.i.d. Bernoulli. As before, we denote the centered randomization vector $\bz := 2\bx-\mathbf{1} \in \mathbb{R}^T$. The estimator $\widehat{\tau}(q)$ is a quadratic polynomial of the $\bz$:
\begin{align*}
	\widehat{\tau}(q) &= \tfrac{1}{T}\langle (2\bx-\mathbf{1}) \ast q , (2\bx-\mathbf{1}) \ast g \rangle + \tfrac{1}{T}\langle (2\bx-\mathbf{1}) \ast q , \mathbf{1} \ast g + 2e \rangle \;, \\
	&= \tfrac{1}{T}\langle \bz \ast q , \bz \ast g \rangle + \tfrac{1}{T}\langle \bz \ast q , h \rangle  \;,
\end{align*}
where $$h := \mathbf{1} \ast g + 2e\in \mathbb{R}^T\;.$$
Observe that the difference between the estimator and the estimand, noted as $\bW_t$, takes the form
\begin{align}
	\label{eqn:decomp}
	\bW_T:= \widehat{\tau}(q) - \tau(q) = \tfrac{1}{T} \sum_{i \neq j \in [T]} \bz_i \bz_j H_{ij} + \tfrac{1}{T} \sum_{i  \in [T]} \bz_i L_i \;,
\end{align}
where
\begin{align}
	\label{eqn:H-and-L}
	H_{ij} := \sum_{t \in [T]} q^\circ_{t-i} g^\circ_{t-j}, ~L_i := \sum_{t \in [T]} q^\circ_{t-i} h_{t} \;.
\end{align}
Here the $^\circ$ notation denotes the circular extension defined in \eqref{eqn:circ-abbrev}.
This calculation reveals that the estimator is unbiased.  The notation also allows us to connect back to the linear model.

\subsection{Circular vs. Linear Convolutions}
\label{sec:circular_vs_linear}
This section shows that circular and linear convolutions are intimately connected, at least for the type of impulse response we consider.
Consider an impulse response function $g \in \mathbb{R}^T$.  As developed in Section~\ref{sec:linear-conv}, the estimator for the linear convolution model was given by
\begin{align*}
	\bW_T^{\mathcal{L}}:= \widehat{\tau}^{\mathcal{L}}(q) - \tau^{\mathcal{L}}(q)
	= \tfrac{1}{T} \sum_{i \neq j \in [T]} \bz_i \bz_j H_{ij}^{\mathcal{L}} + \tfrac{1}{T} \sum_{i  \in [T]} \bz_i L_i^{\mathcal{L}} \;,
\end{align*}
where
\begin{align*}
	H^\mathcal{L} &:=  \mathcal{T}_q^\transp \mathcal{T}_g \;, \\
	L^{\mathcal{L}} &:= \mathcal{T}_q^\transp h^{\mathcal{L}}, ~\text{where}~h^{\mathcal{L}} := \mathbf{1} * g+ 2 e \;.
\end{align*}
We can already see that the linear term $L^\mathcal{L}$ closely imitates the circular convolution model and exhibits asymptotic normality under mild conditions.

We now show that the matrix $H^\mathcal{L}$ and $H$ are entrywise close.
\begin{lemma}
Recall $H_{ij}$'s defined in \eqref{eqn:H-and-L}, and assume that $q, g$ are supported only on the first $K$-entries with each entry bounded. Then
\begin{align*}
	\sup_{z \in \{-1,1\}^T}  \tfrac{1}{T}\left| z^\transp ( H - H^{\mathcal{L}}) z \right|= O \left(\tfrac{K^3}{T} \right) \;.
\end{align*}
\end{lemma}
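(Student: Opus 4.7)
The plan is to exploit the fact that $H$ is circulant while $H^{\mathcal{L}}$ is Toeplitz: since $q$ and $g$ are both supported on the first $K$ coordinates, the two matrices coincide everywhere except near the top-right and bottom-left corners of the $T \times T$ grid, where either the linear sum gets truncated at its upper boundary or the circular sum picks up wrap-around contributions. The whole argument reduces to counting those affected entries and bounding their magnitudes.

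First, I would write the entries explicitly as
\[
H^{\mathcal{L}}_{ij} = \sum_{t=\max(i,j)}^{T-1} q_{t-i}\, g_{t-j}, \qquad H_{ij} = \sum_{t \in [T]} q^{\circ}_{t-i}\, g^{\circ}_{t-j},
\]
and then verify that whenever $\max(i,j) \leq T - K$ the support hypotheses on $q$ and $g$ force no wrap-around to occur in $H_{ij}$ and no boundary truncation to bite in $H^{\mathcal{L}}_{ij}$, so $H_{ij} = H^{\mathcal{L}}_{ij}$. Consequently, $(H - H^{\mathcal{L}})_{ij}$ can only be nonzero in two regions: the truncation band $\{(i,j) : |i-j| \leq K-1,\ \max(i,j) > T - K\}$ (where the Toeplitz sum is cut off by its upper limit $T-1$) and the corner anti-band $\{(i,j) : i \leq K-2,\ j \geq T - K + 1\} \cup \{(i,j) : j \leq K-2,\ i \geq T - K + 1\}$ (where the circulant extension contributes wrap-around terms absent from the linear version). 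A direct count shows that each region contains at most $O(K^2)$ pairs.

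Next, each individual entry of $H$ and of $H^{\mathcal{L}}$ is a sum of at most $K$ products of uniformly bounded quantities, so by the triangle inequality $|(H - H^{\mathcal{L}})_{ij}| = O(K)$. Combining with the $O(K^2)$ count of nonzero entries yields
\[
\sum_{i,j \in [T]} |(H - H^{\mathcal{L}})_{ij}| = O(K^3).
\]
Finally, for any $z \in \{-1, +1\}^T$, the crude bound $|z^\transp M z| \leq \sum_{i,j} |M_{ij}|$ (valid because $|z_i z_j| = 1$) gives $|z^\transp (H - H^{\mathcal{L}}) z| = O(K^3)$ uniformly in $z$, and dividing by $T$ establishes the lemma.

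The only place I expect friction is the bookkeeping of which $(i,j)$ near the corners belong to which regime; the algebra is routine but requires a careful case split on whether $i \leq j$ or $i > j$, combined with the two possible locations of the wrap-around. No step requires a substantive new idea beyond support counting and the triangle inequality, so I view this as a purely combinatorial lemma laying the groundwork for transferring the asymptotic-normality results from the circular model back to the linear model.
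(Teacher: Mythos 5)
Your proposal is correct and follows essentially the same route as the paper: both arguments reduce to observing that $H_{ij}=H^{\mathcal{L}}_{ij}$ except when $i$ or $j$ lies within $K$ of the boundary, that the band structure forces at most $O(K^2)$ discrepant entries, each of magnitude $O(K)$, and then bounding $|z^\transp(H-H^{\mathcal{L}})z|$ by the entrywise $\ell_1$ norm. The only cosmetic difference is that the paper phrases the entry comparison via columns of circulant versus Toeplitz matrices ($H_{ij}=\langle \mathcal{C}_q u_i,\mathcal{C}_g u_j\rangle$ versus $\langle \mathcal{T}_q u_i,\mathcal{T}_g u_j\rangle$) rather than your explicit index bookkeeping.
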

\begin{proof}
Note that we can write the entries of both matrices in a parallel manner
\begin{align*}
	H_{ij} = \langle \mathcal{C}_q u_i, \mathcal{C}_g u_j \rangle \;, \; H_{ij}^{\mathcal{L}} = \langle \mathcal{T}_q u_i, \mathcal{T}_g u_j \rangle\,.
\end{align*}
These are dot products between columns of circulant and Toeplitz matrices, respectively. Let $v$ be a vector supported only on the first $K$ entries. Then $\mathcal{C}_v u_i = \mathcal{T}_v u_i$ whenever $i < T-K$. Hence, $H_{ij}=H_{ij}^{\mathcal{L}}$ unless either $i$ or $j$ is greater equal than $T-K$. We also have that both $H_{ij}$ and $H^{\mathcal{L}}_{ij}$ equal $0$ if $|i-j  \pmod{T} | >K$ due to the fact that $q, g$ are supported only on the first $K$-entries. By the same reasoning, $\max_{i,j} |H_{ij}|$ and $\max_{i,j} |H^{\mathcal{L}}_{ij}|$ are both $O(K)$. Hence $H_{ij}-H_{ij}^{\mathcal{L}}$ is only nonzero in at most $O(K^2)$ entries, and in these entries, the magnitude of the difference is at most $O(K)$. This proves the lemma. 
\end{proof}

In what follows, we shall establish $\tfrac{1}{\sqrt{T}}$ asymptotic normality for the circular convolution model where the higher order moment calculations and formulae are considerably simpler. Coupled with the above approximation result, asymptotic normality holds for the linear model.

\subsection{Variance of the Circular Convolution Model}
\label{sec:higher-moments}

In the circular convolution model, the variance of $\widehat{\tau}(q)$ yields a more interpretable expression than the linear model. Since the two estimators are close to each other for large $T$, we can draw further insights about how the variance depends on $g$ and $q$.

\begin{proposition}[Second Moment]
	\label{prop:second-moment}
\begin{align*}
	T \cdot \E_{\bx} \left[ \big( \widehat{\tau}(q) -  \tau(q) \big)^2 \right] = \cV_{Q} + \cV_{L} \;,
\end{align*}
where
\begin{align}\label{eq:delta-var-decomposition}
	\cV_{Q}&= \operatorname{var}\left(\frac{1}{\sqrt{T}} \sum_{i \neq j \in [T]} \bz_i \bz_j H_{ij}\right)= \| g \ast q \|_2^2 + \langle g \ast g, q \ast q \rangle - 2 \langle q, g \rangle^2  \\
	 \cV_{L}&=  \operatorname{var}\left( \tfrac{1}{\sqrt{T}} \sum_{i  \in [T]} \bz_i L_i\right)=  \tfrac{1}{T}  \| (\mathbf{1} \ast g + 2e) \ast  q \|_2^2\,.
\end{align}
\end{proposition}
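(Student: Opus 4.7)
The plan is to exploit the decomposition $\bW_T=\frac{1}{T}\sum_{i\neq j}\bz_i\bz_j H_{ij}+\frac{1}{T}\sum_i \bz_i L_i$ already derived in \eqref{eqn:decomp}, compute the second moment directly via the Rademacher chaos formulas referenced in Appendix~\ref{app:chaos}, and then translate the resulting matrix invariants of $H$ back into the convolutional quantities in the statement using the circulant structure of $H=\mathcal{C}_q^\top \mathcal{C}_g$.

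The first observation is that the quadratic and linear pieces are uncorrelated: every cross product pulls out a triple $\bz_i\bz_j\bz_k$ with $i\neq j$, whose expectation vanishes by independence and odd symmetry. This splits the variance additively, giving $T\operatorname{var}(\widehat{\tau}(q)-\tau(q))=\cV_Q+\cV_L$. The linear piece is immediate: $\operatorname{var}(\frac{1}{\sqrt T}\sum_i \bz_i L_i)=\frac{1}{T}\|L\|_2^2$. One then checks that $L_i=\sum_t q^\circ_{t-i}h_t$ has DFT $\overline{\hat q_k}\hat h_k$, so Parseval gives $\|L\|_2^2=\|q\ast h\|_2^2$, matching the claimed $\frac{1}{T}\|(\mathbf{1}\ast g+2e)\ast q\|_2^2$.

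For the quadratic piece I would invoke the standard off-diagonal chaos identity: for any matrix $M$, $\E[(\sum_{i\neq j}\bz_i\bz_j M_{ij})^2]=\sum_{i\neq j}(M_{ij}^2+M_{ij}M_{ji})=\|M\|_F^2+\operatorname{tr}(M^2)-2\sum_i M_{ii}^2$, obtained by noting that $\E[\bz_i\bz_j\bz_k\bz_l]$ is $1$ only when $\{i,j\}=\{k,l\}$ (with $i\neq j$) and $0$ otherwise. Applied to $M=H$, this yields $\cV_Q=\frac{1}{T}\bigl(\|H\|_F^2+\operatorname{tr}(H^2)-2\sum_i H_{ii}^2\bigr)$.

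The main obstacle, and the step that requires the most care, is converting each of these three matrix invariants into the clean convolutional form of $\cV_Q$. All three simplifications rest on the fact that $H_{ij}$ depends only on $i-j \pmod T$ (so $H$ is circulant): (i) the diagonal collapses via a change of variables $u=t-i$ to $H_{ii}=\sum_u q^\circ_u g^\circ_u=\langle q,g\rangle$, independent of $i$, whence $\sum_i H_{ii}^2=T\langle q,g\rangle^2$; (ii) $\frac{1}{T}\|H\|_F^2=\|g\ast q\|_2^2$, which follows either from the circulant eigenvalues $\{\overline{\hat q_k}\hat g_k\}$ together with Parseval, or by direct index manipulation; (iii) $\frac{1}{T}\operatorname{tr}(H^2)=\langle g\ast g,q\ast q\rangle$ is the trickiest identity. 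I would expand $\operatorname{tr}(H^2)=\sum_{i,j,t,s}q^\circ_{t-i}g^\circ_{t-j}q^\circ_{s-j}g^\circ_{s-i}$, reparametrize by $(a,b,c)=(t-i,t-j,s-j)$ while picking up a factor of $T$ from the free index $i$, reducing to $T\sum_{a,b,c}q^\circ_a g^\circ_b q^\circ_c g^\circ_{a-b+c}$; expanding $\langle g\ast g,q\ast q\rangle$ via the definition of circular convolution and collapsing the inner $t$-sum yields the same expression. Combining (i)--(iii) produces exactly $\|g\ast q\|_2^2+\langle g\ast g,q\ast q\rangle-2\langle q,g\rangle^2$, completing the proof.
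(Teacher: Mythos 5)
Your proposal is correct and follows essentially the same route as the paper: the same decomposition into quadratic and linear Rademacher chaos, the same off-diagonal second-moment identity (the paper's Appendix~\ref{app:chaos}), and equivalent index manipulations to reach the convolutional form of $\cV_Q$ and $\cV_L$. The only cosmetic differences are that the paper kills the cross terms via a martingale-difference argument rather than odd moments, and converts $\|H\|_F^2$, $\trace(H^2)$, and $\|L\|_2^2$ using its circular-symmetry lemma (Proposition~\ref{prop:circular-symmetry}) where you invoke the circulant/Fourier structure of $H=\mathcal{C}_q^\transp\mathcal{C}_g$ and Parseval.
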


The variance of this estimator thus is governed by the norm of the convolution of $g$ with $q$ and on the norm of the exogenous disturbance $e$. In order for this variance to be small, it is necessary that $g$ and $q$ have norms that grow sublinearly with $T$. In particular, this suggests that $q$ should have sublinear support size. The estimable treatment effects must only concern a small number of components of the impulse response $g$. Under these assumptions, we can further show that the estimator is asymptotically normal. Asymptotic normality will serve as a building block for inference of the causal parameters.

\subsection{Asymptotic Normality}
\label{sec:asmptotic-normality}

We establish asymptotic normality individually for the linear and quadratic terms of~\eqref{eqn:decomp}. To establish the asymptotic normality for the quadratic term, we need one technical assumption on the decay of signal $g$ and the vector $q$. It is crucial to note that asymptotic normality does not hold for arbitrary $g$ and $q$; it holds when the signal $g\ast q$ decays sufficiently fast.
\begin{assumption}
	\label{asmp:main}
	Assume $g$ and $q$ satisfy
 $$\| (|g| \ast |q|) \ast (|g| \ast |q|) \|_2^2 \leq C \cdot T^{1-\epsilon} \cdot \big( \| g \ast q \|_2^2 + \langle g \ast g, q \ast q \rangle - 2 \langle q, g \rangle^2  \big)^2 $$
		with some universal constants $C>1$, $\epsilon>0$.
\end{assumption}

This assumption can be satisfied by various means; a sufficient condition is when the signal $g \ast q$ is approximately supported on the top-K elements with $K = K(T) \ll T$. One that most closely connects with assumptions in the dynamical systems literature is to assume $g$ is asymptotically stable. For the purpose of this work, this just means that $g_t$ can be upper bounded by an exponentially decaying function. Indeed, suppose $|g_t| \precsim \rho^{t}$ with some positive $\rho \in (0, 1)$ and $\| g \ast q \|_2^2 + \langle g \ast g, q \ast q \rangle - 2 \langle q, g \rangle^2  = \Omega(1)$, a non-trivial limiting variance, then the above assumption holds with a wide range of $K = K(T) = \mathrm{Polylog}(T)$. This is true because $\| (|g| \ast |q|) \ast (|g| \ast |q|) \|_2^2 \precsim \mathrm{Poly}(K) = \mathrm{Polylog}(T) \precsim  T^{1-\epsilon}$. The constant $\rho$ is related to the mixing time of the dynamical system when driven by white noise~\cite{simchowitz2018learning}.

With Assumption~\ref{asmp:main} established, we can now proceed to verify asymptotic normality. Denote
\begin{align*}
		\bH_{T} :&=  \tfrac{1}{T}\sum_{i \neq j \in [T]} \bz_i \bz_j H_{ij} \;,
		\bL_{T} := \tfrac{1}{T} \sum_{i  \in [T]} \bz_i L_i\,,
\end{align*}
where $H_{ij}$ and $L_i$ are defined in \eqref{eqn:H-and-L}.

The following asymptotic normality holds for generic $g$ and $q$ satisfying Assumption~\ref{asmp:main}.
\begin{theorem}[Quadratic Term]
	\label{thm:Quadratic-Term}
	Under Assumption~\ref{asmp:main},
	\begin{align*}
		\frac{\sqrt{T} \cdot \bH_{T}}{ \sqrt{ \| g \ast q \|_2^2 + \langle g \ast g, q \ast q \rangle - 2 \langle q, g \rangle^2 } } \Rightarrow \cN(0, 1), \quad \text{as $T\rightarrow \infty$} \;.
	\end{align*}
\end{theorem}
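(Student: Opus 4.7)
The plan is to establish the CLT for $\bH_T$ via the Hall--Heyde martingale central limit theorem. Observe first that because $H_{ij} = \sum_t q^\circ_{t-i} g^\circ_{t-j}$ depends only on $j - i \pmod{T}$, the matrix $H$ is circulant; in particular, every row of $H$ has the same $\ell_2$-norm, equal to $\|q \star g\|_2$ (circular cross-correlation of $q$ and $g$), which is bounded under the decay embedded in Assumption~\ref{asmp:main}. I would rewrite $\bH_T = \tfrac{1}{T} \sum_{i<j} (H_{ij} + H_{ji}) \bz_i \bz_j$, filter by $\mathcal{F}_k = \sigma(\bz_1, \ldots, \bz_k)$, and decompose $\bH_T = \sum_{k=1}^{T} \Delta_k$ into the martingale differences $\Delta_k = \tfrac{\bz_k}{T} \sum_{i<k} (H_{ik} + H_{ki}) \bz_i$. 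It then suffices to verify (i) the conditional-variance convergence $V_T := \sum_k \E[\Delta_k^2 \mid \mathcal{F}_{k-1}] \to \operatorname{var}(\bH_T)$ in probability, and (ii) a Lyapunov $L^4$ bound $\sum_k \E[\Delta_k^4] = o(\operatorname{var}(\bH_T)^2)$.

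Condition (ii) is the easy one. By Khintchine's inequality applied in each $\bz_i$, $\E[\Delta_k^4] \le 3 T^{-4} \bigl( \sum_{i<k} (H_{ik} + H_{ki})^2 \bigr)^2$, and the circulant structure bounds each row-norm-squared by the $O(1)$ constant $4 \|q \star g\|_2^2$. Summing over $k$ yields $\sum_k \E[\Delta_k^4] = O(T^{-3})$, while $\operatorname{var}(\bH_T)^2 = \cV_Q^2 / T^2$ is of order $T^{-2}$ under a non-degenerate $\cV_Q$, so Lyapunov closes with room to spare.

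Condition (i) is the substance of the theorem. Since $\E[V_T] = \operatorname{var}(\bH_T) = \cV_Q/T$ and $V_T$ is itself a degree-two Rademacher chaos in $\bz_1, \ldots, \bz_{T-1}$, Chebyshev reduces the task to showing $\operatorname{var}(V_T) = o(\operatorname{var}(\bH_T)^2)$. Expanding $V_T$ into chaos form makes $\operatorname{var}(V_T)$ a fourth-order sum over products of four entries of $H$, which upon unfolding $H_{ij}$ becomes an eighth-order sum in the $q$- and $g$-indices. Bounding entrywise $|H_{ij}| \le (|q|\ast|g|)^\circ_{j-i}$ and exploiting cyclic translation invariance of circular convolution absorbs one of the four index sums into a mere shift; the remaining three-fold sum collapses via Cauchy--Schwarz and Parseval to $T^{-3} \|(|g|\ast|q|)\ast(|g|\ast|q|)\|_2^2$. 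Assumption~\ref{asmp:main} then delivers $\operatorname{var}(V_T) \lesssim T^{-3} \cdot T^{1-\epsilon} \cV_Q^2 = T^{-2-\epsilon} \cV_Q^2$, which is $o(\operatorname{var}(\bH_T)^2)$, so $V_T \to \operatorname{var}(\bH_T)$ in $L^2$ and the martingale CLT yields the stated conclusion.

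The principal obstacle is precisely the combinatorial step inside condition (i): the eighth-order sum generates several index-contraction topologies (in the language of Wick diagrams, "rectangles," "cycles," "stars"), and each one must be majorized by the same self-convolution norm $\|(|g|\ast|q|)\ast(|g|\ast|q|)\|_2^2$ so that the single hypothesis of Assumption~\ref{asmp:main} governs all of them uniformly. This is where the specific circular-convolution structure of $H$ — not just symmetry or boundedness — is indispensable, and it is the ``intricate balancing of eighth moments'' alluded to in the introduction.
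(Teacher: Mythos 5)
Your proposal is correct in outline, and it reaches the conclusion by a route that is formally different from, but mathematically equivalent to, the paper's. The paper does not run the Hall--Heyde argument directly: it invokes de Jong's CLT for degenerate quadratic forms \citep{dejong1987CentralLimit}, checking (a) a negligibility condition, $\max_i \sum_{j\neq i}(H_{ij}^2+H_{ij}H_{ji}) \le \tfrac{1}{T}\sum_{i\neq j}(H_{ij}^2+H_{ij}H_{ji})$ (Lemma~\ref{lemma:H-inf-norm}), and (b) the fourth-moment condition $\E[\bH_T^4]/(\E[\bH_T^2])^2 \to 3$ (Lemma~\ref{lemma:fourth-moment}). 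Since de Jong's theorem is itself proved by exactly the martingale decomposition you write down, your conditions (i) and (ii) correspond to his (b) and (a) respectively: your Lyapunov bound is the analogue of the paper's row-norm lemma, and your $\operatorname{var}(V_T)$ computation unfolds into the same cycle sum $\sum_{i\neq j\neq k\neq l}|\omega_{ij}\omega_{jk}\omega_{kl}\omega_{li}|$ that dominates the paper's fourth-moment estimate. The decisive technical input is identical in both arguments: the bound $\sum|\omega_{ij}\omega_{jk}\omega_{kl}\omega_{li}| \le 16T\,\|(|g|\ast|q|)\ast(|g|\ast|q|)\|_2^2$, obtained by taking absolute values entrywise, using circular translation invariance to reduce the four-fold index sum, and controlling all sixteen contraction patterns by a single Parseval/convolution-theorem argument --- precisely the step you flag as the principal obstacle. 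Your route buys self-containedness and makes the probabilistic mechanism transparent; the paper's buys brevity by outsourcing the martingale machinery to a citation. Two minor cautions if you execute your version in full: first, the conditional variance $V_T$ involves the truncated inner sums $\sum_{k>\max(i,i')}\omega_{ik}\omega_{i'k}$, so you must pass to absolute values before exploiting circulant structure (the truncation breaks translation invariance); second, your Lyapunov step should be phrased as a ratio --- $\sum_k\E[\Delta_k^4]/\operatorname{var}(\bH_T)^2 = O(1/T)$ --- rather than assuming $\|g\ast q\|_2 = O(1)$, since Assumption~\ref{asmp:main} is a relative condition and does not by itself bound $\cV_Q$.
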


In order to prove this theorem, we need to record two lemmas. These are both proven in the appendix. The first lemma controls the magnitude of terms in the second moment of $\mathbf{H}$. This is needed in multiple places in order to verify asymptotic normality.
\begin{lemma}\label{lemma:H-inf-norm}
With $H_{ij}$ defined as in~\eqref{eqn:H-and-L},
\begin{align*}
		\max_{i} \sum_{j: j\neq i} H_{ij}^2+H_{ij}H_{ji} \leq \| g \ast q \|_2^2 + \langle g \ast g, q \ast q \rangle -2 \langle q, g \rangle^2\,.
\end{align*}
\end{lemma}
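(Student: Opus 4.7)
The plan is to exploit the circular-shift symmetry of $H_{ij}$ and then apply Parseval's identity in the discrete Fourier domain to recognize the two terms on the right-hand side. A pleasant byproduct will be that the bound holds with equality for every $i$, so the outer maximum is cosmetic.

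First I would reindex $s = (t-i) \bmod T$ in \eqref{eqn:H-and-L} and observe that $H_{ij} = f((j-i) \bmod T)$, where $f(r) := \sum_{s \in [T]} q_s\, g^\circ_{s-r}$. Writing $\tilde{g}_u := g^\circ_{-u}$ for the circular reflection of $g$, the signal $f$ is exactly the circular convolution $q \ast \tilde{g}$, which is the identification that makes Plancherel applicable. Since $f$ is independent of $i$, the sum $\sum_{j \neq i} (H_{ij}^2 + H_{ij} H_{ji})$ equals $\sum_{r \neq 0} (f(r)^2 + f(r) f(-r))$, a quantity free of $i$, so the proof reduces to computing these two pieces and adding.

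Next I would evaluate the two pieces via Plancherel. Since $\widehat{\tilde{g}}(k) = \overline{\hat{g}(k)}$ for real $g$, the DFT of $f = q \ast \tilde{g}$ is $\hat{f}(k) = \hat{q}(k)\,\overline{\hat{g}(k)}$, so Plancherel gives $\sum_r f(r)^2 = \tfrac{1}{T}\sum_k |\hat{q}(k)|^2\,|\hat{g}(k)|^2 = \|g \ast q\|_2^2$, and subtracting $f(0)^2 = \langle q, g\rangle^2$ yields $\sum_{r\neq 0} f(r)^2 = \|g\ast q\|_2^2 - \langle q, g\rangle^2$. For the cross-term, a second Parseval computation combined with $\overline{\hat{f}(-k)} = \hat{f}(k)$ (real $f$) gives $\sum_r f(r) f(-r) = \tfrac{1}{T}\sum_k \hat{f}(k)^2 = \tfrac{1}{T}\sum_k \hat{q}(k)^2\, \hat{g}(-k)^2$, and a parallel Parseval expansion identifies this last expression as $\langle g \ast g, q \ast q \rangle$. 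Subtracting $f(0)^2$ again and adding the two pieces reproduces the right-hand side exactly.

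The main obstacle is purely notational: keeping track of DFT normalization conventions and of the two distinct ways in which $g$ appears (conjugated/reflected inside $f$ versus unconjugated inside $g \ast g$). Once these identifications are made carefully, no decay or structural assumption on $g$ or $q$ is needed, and the bound is in fact an identity that is the same for every $i$, which is consistent with the circular symmetry used elsewhere in Section~\ref{sec:circ_theory}.
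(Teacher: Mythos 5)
Your proof is correct, and it establishes the stronger fact that the stated inequality is actually an equality whose value does not depend on $i$. The paper reaches the same conclusion but by a different mechanism: it stays entirely in the time domain, expanding $\sum_{j}\omega_{ij}^2-\omega_{ii}^2$ for $\omega_{ij}=H_{ij}+H_{ji}$ and then using the reindexing identity of Proposition~\ref{prop:circular-symmetry} to recognize the inner sums as $(g\ast q)^\circ_{j-i}$, $(g\ast g)^\circ_{j-i}$ and $(q\ast q)^\circ_{j-i}$, exactly mirroring the second-moment computation in Proposition~\ref{prop:second-moment}. You instead make the circulant structure explicit---$H_{ij}=f\bigl((j-i)\bmod T\bigr)$ with $f=q\ast\tilde g$, $\tilde g_u=g^{\circ}_{-u}$---and evaluate $\sum_{r}f(r)^2$ and $\sum_r f(r)f(-r)$ by Parseval, identifying them with $\|g\ast q\|_2^2$ and $\langle g\ast g,q\ast q\rangle$ in the frequency domain; your bookkeeping of the reflection/conjugation ($\widehat{f}(k)=\widehat{q}(k)\overline{\widehat{g}(k)}$, hence $\widehat{f}(k)^2=\widehat{q}(k)^2\widehat{g}(-k)^2=\widehat{q\ast q}(k)\,\overline{\widehat{g\ast g}(k)}$) is exactly right, and subtracting $f(0)^2=\langle q,g\rangle^2$ from each piece reproduces the right-hand side. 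The Fourier route is arguably cleaner here because it makes the $i$-independence and the equality transparent at the outset, and it is consistent with the paper's own later use of Parseval in the proof of Lemma~\ref{lemma:fourth-moment}; the paper's time-domain route has the advantage of reusing, essentially verbatim, the algebra already carried out for the variance formula. Neither argument needs any decay assumption on $g$ or $q$, as you note.
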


The second lemma provides the critical moment condition bounding the ratio of the fourth moment and square of the second moment of $\bH$.
\begin{lemma}[Fourth Moment Estimate]
	\label{lemma:fourth-moment}
Then
	\begin{align*}
		\left| \frac{\E[\bH_{T}^4 ]}{\big( \E [\bH_T^2] \big)^2 } -   3 \right| \leq \frac{4}{T} + \frac{16}{T} \frac{\| (|g| \ast |q|) \ast (|g| \ast |q|) \|_2^2}{\left(\| g \ast q \|_2^2 + \langle g \ast g, q \ast q \rangle - 2 \langle q, g \rangle^2 \right)^2} \;.
	\end{align*}
\end{lemma}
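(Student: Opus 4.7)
The plan is to expand $\E[\bH_T^4]$ explicitly, peel off the Isserlis-type $3(\E[\bH_T^2])^2$ piece, and bound the residual using Lemma~\ref{lemma:H-inf-norm} and the circulant structure of $\tilde H$. Setting $\tilde H_{ij}=H_{ij}+H_{ji}$, one has $T\bH_T=\sum_{i<j}\tilde H_{ij}\bz_i\bz_j$, and the Rademacher moment rule $\E[\prod_k \bz_{i_k}\bz_{j_k}]=\mathbf{1}\{\text{every vertex appears evenly}\}$ reduces $\E[(T\bH_T)^4]$ to a sum over $4$-tuples of unordered edges whose eight endpoint-slots form an Eulerian multigraph. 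A short case analysis identifies exactly three such patterns: (a) four copies of a single edge, contributing $\sum_e\tilde H_e^4$; (b) two distinct edges each doubled, with $\binom{4}{2}=6$ orderings, contributing $3\bigl[(\sum_e\tilde H_e^2)^2-\sum_e\tilde H_e^4\bigr]$; and (c) four distinct edges forming a $4$-cycle on four distinct vertices, with $4!=24$ orderings, contributing $24\sum_{\text{4-cycles}}\prod_e\tilde H_e$. Since $\sum_e\tilde H_e^2=TV=T^2\E[\bH_T^2]$, the (b) contribution telescopes exactly with $3(\E[\bH_T^2])^2$, leaving
\begin{equation*}
\tfrac{\E[\bH_T^4]}{(\E[\bH_T^2])^2}-3 \;=\; \frac{-2\sum_e\tilde H_e^4 \;+\; 24\sum_{\text{4-cycles}}\prod_e\tilde H_e}{T^2 V^2}.
\end{equation*}

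For the first numerator piece, Lemma~\ref{lemma:H-inf-norm} (combined with symmetrization) yields $\max_i\sum_j\tilde H_{ij}^2\le 2V$, hence $\max_e\tilde H_e^2\le 2V$, and $\sum_e\tilde H_e^4\le(\max_e\tilde H_e^2)(\sum_e\tilde H_e^2)\le 2TV^2$, which accounts for the $4/T$ term. For the $4$-cycle piece, the circulant structure with first row $\tilde h(m)=h(m)+h(-m)$ (where $h(m):=H_{i,i+m}$), combined with inclusion–exclusion on the degenerate cases $a=c$ and $b=d$ of the ordered trace sum, gives
\begin{equation*}
24\sum_{\text{4-cycles}}\prod_e \tilde H_e \;=\; 3\,\trace(\tilde H^4) \;-\; 6\sum_a \alpha_a^2 \;+\; 3\sum_{a\neq b}\tilde H_{ab}^4,
\end{equation*}
with $\alpha_a=\sum_b\tilde H_{ab}^2=2V$ (constant by shift invariance) and $\trace(\tilde H^4)=T\|\tilde h\ast\tilde h\|_2^2$. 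Everything then collapses to bounding $\|\tilde h\ast\tilde h\|_2^2$ by a multiple of $P:=\|(|g|\ast|q|)\ast(|g|\ast|q|)\|_2^2$.

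The Parseval step is the crux. Pointwise $|h(m)|\le \phi(m):=\sum_s|q(s)||g(s-m)|$, so by the triangle inequality for convolution, $\|h\ast h\|_2^2\le \||h|\ast|h|\|_2^2\le\|\phi\ast\phi\|_2^2$; the Fourier identity $|\hat\phi(\omega)|=|\widehat{|q|}(\omega)|\cdot|\widehat{|g|}(\omega)|=|\hat f(\omega)|$ for $f=|g|\ast|q|$ then yields $\|\phi\ast\phi\|_2^2=\|f\ast f\|_2^2=P$, while the real-part inflation $|\widehat{\tilde h}(\omega)|=|2\Re\hat h(\omega)|\le 2|\hat h(\omega)|$ upgrades this to $\|\tilde h\ast\tilde h\|_2^2\le 16P$. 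Assembling the two bounds produces the claim. The main obstacle I anticipate is the combinatorics of step (c): it is easy to conflate the ``two disjoint doubled edges'' configuration (which belongs to pattern (b), not (c)) with a genuine $4$-cycle, and pinning down the coefficient of $\sum_{\text{4-cycles}}\prod_e\tilde H_e$ requires tracking both the $4!$ position orderings and the size-$8$ symmetry group of an unordered $4$-cycle with care. The Fourier reduction is otherwise routine once the symmetrization inflation $|\widehat{\tilde h}|\le 2|\hat h|$ is identified.
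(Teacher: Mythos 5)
Your setup, your reduction of $\E[\bH_T^4]$ to the three edge patterns, and your treatment of the $\sum_e \tilde H_e^4$ piece coincide exactly with the paper (its Proposition~\ref{prop:edge-indices-graph} plus Lemma~\ref{lemma:H-inf-norm} give the same $4/T$ term by the same $\max\cdot\sum$ argument). Where you genuinely diverge is the $4$-cycle term: the paper simply bounds the restricted sum by the unrestricted absolute sum, $\bigl|\sum_{i\neq j\neq k\neq l}\omega_{ij}\omega_{jk}\omega_{kl}\omega_{li}\bigr|\le\sum_{i,k}\bigl(\sum_j|\omega_{ij}\omega_{jk}|\bigr)\bigl(\sum_l|\omega_{kl}\omega_{li}|\bigr)$, expands each inner sum into four nonnegative circular correlations, and bounds each of the resulting $16$ eighth-moment terms by $T\,P$ with $P:=\|(|g|\ast|q|)\ast(|g|\ast|q|)\|_2^2$ via Cauchy--Schwarz in the Fourier domain. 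You instead compute the restricted sum \emph{exactly} by inclusion--exclusion on the circulant trace. That is a legitimate and in some ways cleaner idea, but it is precisely where your argument fails to deliver the stated inequality.

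Concretely, write $V:=\|g\ast q\|_2^2+\langle g\ast g,q\ast q\rangle-2\langle q,g\rangle^2$. Your identity gives a residual numerator $-\sum_{a\neq b}\tilde H_{ab}^4+3\,\trace(\tilde H_0^4)-6\sum_a\alpha_a^2+3\sum_{a\neq b}\tilde H_{ab}^4$ (with $\tilde H_0$ the zero-diagonal part), to be divided by $T^2V^2$. Since $\alpha_a=2V$ for every $a$, the term $-6\sum_a\alpha_a^2=-24TV^2$ is an exact, large negative quantity; because the overall residual must be bounded in absolute value, this single term already contributes $24/T$, which cannot be absorbed into the claimed $4/T$ unless you exhibit cancellation against $3\,\trace(\tilde H_0^4)+2\sum_{a\neq b}\tilde H_{ab}^4$ --- and "assembling the two bounds" does not exhibit it (indeed the perfect-matching circulant shows the assembled numerator can equal $-4TV^2$, so any cancellation is delicate). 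Likewise $3\,\trace(\tilde H_0^4)\le 3\cdot 16TP$ yields $48P/(TV^2)$, not $16P/(TV^2)$. So your route proves a bound of the same form with strictly larger constants, which suffices for the downstream CLT but does not prove the lemma as stated. Two smaller points: (i) the diagonal of $\tilde H$ is $2\langle q,g\rangle\neq 0$, so your identity requires the zero-diagonal circulant, whose symbol is $\widehat{\tilde h}_k-2\langle q,g\rangle$, and the inflation $|\widehat{\tilde h_0}_k|\le 2|\widehat{h}_k|$ you rely on no longer holds without further patching; (ii) your Parseval step itself ($\|h\ast h\|_2^2\le\|\phi\ast\phi\|_2^2=P$ via pointwise domination in the time domain, then Fourier) is correct and is essentially the same mechanism the paper uses on each of its $16$ terms.
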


Together with Assumption~\ref{asmp:main}, these two lemmas immediately yield a proof of Theorem~\ref{thm:Quadratic-Term} via the argument of DeJong~ \citep{dejong1987CentralLimit}.
\begin{proof}[Proof of Theorem~\ref{thm:Quadratic-Term}]
    We first verify the assumption (a) in Theorem 2.1 of \cite{dejong1987CentralLimit}. By Lemma~\ref{lemma:H-inf-norm},
	\begin{align*}
		\frac{\max_{i} \sum_{j: j\neq i} H_{ij}^2+H_{ij}H_{ji}}{\sum_{i\neq j} H_{ij}^2+H_{ij}H_{ji}} \leq \frac{1}{T}  \,,
	\end{align*}
	which goes to zero as $T\rightarrow \infty$.

By Lemma~\ref{lemma:fourth-moment}
	\begin{align*}
		\left| \frac{\E[\bH_{T}^4 ]}{\big( \E [\bH_T^2] \big)^2 } -   3 \right| \leq \frac{4}{T} + \frac{16}{T} \frac{\| (|g| \ast |q|) \ast (|g| \ast |q|) \|_2^2}{\left(\| g \ast q \|_2^2 + \langle g \ast g, q \ast q \rangle - 2 \langle q, g \rangle^2 \right)^2} \;.
	\end{align*}
	By Assumption~\ref{asmp:main}, we have
	\begin{align*}
		\frac{\| (|g| \ast |q|) \ast (|g| \ast |q|) \|_2^2}{\left(\| g \ast q \|_2^2 + \langle g \ast g, q \ast q \rangle - 2 \langle q, g \rangle^2 \right)^2} \leq C \cdot T^{1-\epsilon}
	\end{align*}
	and thus
	\begin{align*}
		\lim_{T \rightarrow \infty} \frac{\E[\bH_{T}^4 ]}{\big( \E [\bH_T^2] \big)^2 } = 3 \;.
	\end{align*}
	Now, we have verified the assumption (b) in Theorem 2.1 of \cite{dejong1987CentralLimit}. Therefore, asymptotic normality is a direct consequence of Theorem 2.1 in \cite{dejong1987CentralLimit}.
\end{proof}

Asymptotic normality of the linear term follows from the standard Lindeberg-Feller Theorem.
\begin{theorem}[Linear Term]
	\label{lem:Linear-Term}
	Assume that $g, e, q \in \mathbb{R}^T$ satisfy $|\langle \mathbf{1}, g  \rangle \langle \mathbf{1}, q \rangle| >2 \| e \|_{\infty} \| q\|_1$.
Then
	\begin{align*}
		\frac{\sqrt{T} \cdot \bL_{T}}{\sqrt{\tfrac{1}{T}\| (\mathbf{1} \ast g + 2e) \ast  q \|_2^2}} \Rightarrow \cN(0, 1), \quad \text{as $T\rightarrow \infty$} \;.
	\end{align*}
\end{theorem}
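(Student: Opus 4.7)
The quantity $\sqrt{T}\,\bL_T = \tfrac{1}{\sqrt{T}} \sum_{i \in [T]} \bz_i L_i$ is a sum of $T$ independent, mean-zero, \emph{bounded} random variables (since $\bz_i \in \{\pm 1\}$ and $L_i$ is deterministic), so my plan is to apply the Lindeberg–Feller CLT to the triangular array $X_{i,T} = \bz_i L_i / \sqrt{T}$. Two things must be verified: (i) the total variance $s_T^2 := \sum_i \operatorname{var}(X_{i,T}) = \tfrac{1}{T}\|L\|_2^2$ equals the normalization $\tfrac{1}{T}\|(\mathbf{1}\ast g + 2e)\ast q\|_2^2$ stated in the theorem; and (ii) $\max_i L_i^2 / \|L\|_2^2 \to 0$, which is a sufficient condition for Lindeberg's condition since the $X_{i,T}$ are bounded.

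For step (i), I would pass to the Fourier domain. Writing $L_i = \sum_{s \in [T]} q_s h^{\circ}_{s+i}$ via the substitution $s = t-i \bmod T$, a direct DFT computation gives $\widehat{L}_k = \widehat{h}_k \overline{\widehat{q}_k}$, whereas $\widehat{h \ast q}_k = \widehat{h}_k \widehat{q}_k$. Taking absolute values and applying Parseval's identity yields $\|L\|_2^2 = \|h \ast q\|_2^2 = \|(\mathbf{1}\ast g + 2e)\ast q\|_2^2$, matching the denominator in the statement.

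For step (ii), I would give matching upper and lower bounds on $\|L\|_\infty^2$ and $\|L\|_2^2$. The upper bound is immediate from H\"older: $|L_i| \le \|q\|_1 \|h\|_\infty \le \|q\|_1(|\langle \mathbf{1}, g\rangle| + 2\|e\|_\infty)$. For the lower bound I would use that $\mathbf{1} \ast g$ is the constant vector with value $\langle \mathbf{1}, g\rangle$, so that
\begin{equation*}
    h \ast q = \langle \mathbf{1},g\rangle \langle \mathbf{1},q\rangle \mathbf{1} + 2(e \ast q),
\end{equation*}
and applying the reverse triangle inequality together with $\|e \ast q\|_2 \le \sqrt{T}\,\|e\|_\infty \|q\|_1$ yields
\begin{equation*}
    \|L\|_2 = \|h \ast q\|_2 \ge \sqrt{T}\bigl(|\langle \mathbf{1},g\rangle \langle \mathbf{1},q\rangle| - 2\|e\|_\infty \|q\|_1\bigr).
\end{equation*}
This is exactly where the theorem's hypothesis $|\langle \mathbf{1}, g\rangle \langle \mathbf{1}, q\rangle| > 2\|e\|_\infty \|q\|_1$ enters: it guarantees the lower bound is strictly positive and of order $\sqrt{T}$, hence $\max_i L_i^2/\|L\|_2^2 = O(1/T) \to 0$. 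Lindeberg's condition follows, and the CLT then gives convergence to $\cN(0,1)$ after normalizing by $s_T$.

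\textbf{Main obstacle.} None of the three ingredients is hard in isolation, but the key bit of care is the Fourier calculation showing $\|L\|_2 = \|h \ast q\|_2$ despite $L$ being (circular) \emph{cross-correlation} of $q$ against $h$ rather than a convolution; these vectors are not equal, only their power spectra (and hence $\ell_2$ norms) agree. The other subtlety is purely notational: the theorem is implicitly a statement about triangular arrays $g=g^{(T)}$, $q=q^{(T)}$, $e=e^{(T)}$, and the hypothesis must be read as a quantitative gap that persists in the limit; under any reasonable regime where $\|q\|_1, \|g\|_1, \|e\|_\infty$ are $O(1)$, the $O(1/T)$ bound on $\max_i L_i^2/\|L\|_2^2$ is uniform and the argument goes through verbatim.
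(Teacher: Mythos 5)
Your proposal is correct and follows essentially the same route as the paper: both apply the Lindeberg--Feller CLT to $\sum_{i} \bz_i L_i/\sqrt{T}$, identify the normalizer via $\|L\|_2^2 = \|(\mathbf{1}\ast g + 2e)\ast q\|_2^2$, and use the hypothesis $|\langle \mathbf{1}, g\rangle\langle \mathbf{1}, q\rangle| > 2\|e\|_\infty\|q\|_1$ to lower-bound the variance by a positive multiple of $\langle \mathbf{1}, g\rangle^2\langle \mathbf{1}, q\rangle^2$, so that $\max_i L_i^2/\|L\|_2^2 = O(1/T)$ and the Lindeberg condition holds. The only differences are cosmetic: you derive the norm identity by Parseval (the cross-correlation and the convolution share a power spectrum), where the paper reuses the circular-symmetry change of variables from its second-moment proposition, and your reverse-triangle-inequality lower bound on $\sigma_T^2$ replaces the paper's slightly more involved $\epsilon$-completion argument.
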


\begin{proof}[Proof of Theorem~\ref{lem:Linear-Term}]
We can write
	\begin{align*}
		\frac{\sqrt{T}\cdot \bL_{T}}{\sqrt{\tfrac{1}{T}\| (\mathbf{1} \ast g + 2e) \ast  q \|_2^2}} = \sum_{i\in [T]} \bz_i \frac{ \tfrac{1}{\sqrt{T}} L_i}{ \sigma_T} \,,
	\end{align*}
where
	\begin{align*}
		\sigma_T^2 &:= \tfrac{1}{T}\sum_{i\in [T]} L_i^2 = \tfrac{1}{T}  \| (\mathbf{1} \ast g + 2e) \ast  q \|_2^2 \\
		&= \tfrac{1}{T}\ \|  \langle \mathbf{1}, g \rangle  \langle \mathbf{1}, q \rangle  \mathbf{1} + 2e \ast q \|_2^2
	\end{align*}
We know
	\begin{align*}
		\max_i |L_i| &\leq \| \langle \mathbf{1}, g \rangle   \langle \mathbf{1}, q \rangle  \mathbf{1} + 2e \ast q \|_{\infty} \leq   |\langle \mathbf{1}, g \rangle \langle \mathbf{1}, q \rangle | + 2\| e \ast q \|_{\infty} \\
		\sigma_T^2 &=  \langle \mathbf{1}, g \rangle^2  \langle \mathbf{1}, q \rangle^2 + 4 \langle \mathbf{1}, g \rangle  \langle \mathbf{1}, q \rangle   \tfrac{1}{T} \langle \mathbf{1}, e \ast q \rangle + \tfrac{4}{T} \langle e \ast q, e \ast q \rangle
	\end{align*}
	Under the assumption, there exists a small constant $\epsilon$ such that (1) $ | \langle \mathbf{1}, g \rangle   \langle \mathbf{1}, q \rangle | > \tfrac{1}{(1-\epsilon)^2}  2\| e \ast  q \|_{\infty}$ and (2)  $ \langle \mathbf{1}, g \rangle^2   \langle \mathbf{1}, q \rangle^2   >  \tfrac{1}{(1-\epsilon)^2} \tfrac{4}{T} \langle e \ast q, e \ast q \rangle$, and thus
\begin{align*}
		\max_i |L_i|  \leq  (1+ (1-\epsilon)^2) | \langle \mathbf{1}, g \rangle   \langle \mathbf{1}, q \rangle  |  \,.
\end{align*}

Now, using the inequality
\begin{align*}
	&(1-\epsilon)  \langle \mathbf{1}, g \rangle^2  \langle \mathbf{1}, q \rangle^2  +  \tfrac{4}{1-\epsilon} \tfrac{1}{T}\langle e \ast q, e \ast q \rangle + 4 \langle \mathbf{1}, g \rangle   \langle \mathbf{1}, q \rangle   \tfrac{1}{T} \langle 1, e \ast q \rangle \\
	&=\tfrac{1}{T} \| \sqrt{1-\epsilon}   \langle \mathbf{1}, g \rangle   \langle \mathbf{1}, q \rangle   \mathbf{1} + 2 \tfrac{1}{\sqrt{1-\epsilon}} e \ast q \|^2    \geq 0\;,
\end{align*}
we can lower bound the variance
\begin{align*}
		\sigma_T^2 &\geq   \langle \mathbf{1}, g \rangle^2  \langle \mathbf{1}, q \rangle^2  +  \tfrac{4}{T} \langle e \ast q, e \ast q \rangle - (1-\epsilon)   \langle \mathbf{1}, g \rangle^2  \langle \mathbf{1}, q \rangle^2   - \tfrac{4}{1-\epsilon} \tfrac{1}{T}\langle e \ast q, e \ast q \rangle \\
		&\geq \epsilon  \langle \mathbf{1}, g \rangle^2  \langle \mathbf{1}, q \rangle^2 - \tfrac{\epsilon}{1-\epsilon}\tfrac{4}{T} \langle e \ast q, e \ast q \rangle \geq \epsilon^2  \langle \mathbf{1}, g \rangle^2  \langle \mathbf{1}, q \rangle^2 \,.
	\end{align*}

We hence have
\begin{align*}
		\max_i \frac{\tfrac{1}{T}|L_i|^2}{ \sigma_T^2} \leq  \tfrac{1}{T} \frac{(1+ (1-\epsilon)^2)^2  \langle \mathbf{1}, g \rangle^2   \langle \mathbf{1}, q \rangle^2 }{\epsilon^2   \langle \mathbf{1}, g \rangle^2   \langle \mathbf{1}, q \rangle^2 } \rightarrow 0\;,
\end{align*}
	which implies for any fixed $\delta$
	\begin{align*}
		\limsup_{T \rightarrow \infty} \frac{1}{\sigma_T^2} \sum_{i}  \tfrac{1}{T}|L_i|^2 \mathbb{P}( |\bz_i | \tfrac{1}{\sqrt{T}} L_i > \delta \sigma_T ) = 0 \;.
	\end{align*}
	The last equation verifies the Lindeberg-Feller condition; thus, asymptotic normality holds.
\end{proof}

\section{Inference: Confidence Intervals}

Equipped with the asymptotic normality, building confidence intervals is reduced to estimating the variance. In this section, we will extend the plug-in variance estimator of Section~\ref{sec:plugin-linear} to the circular case and show that it leads to valid confidence intervals. The argument for the linear convolution model is exactly the same, albeit with less compact formulas in the calculations.

We will use non-asymptotic concentration inequalities to derive that both terms $\cV_{Q}$ and $\cV_{L}$ in Proposition~\ref{prop:second-moment} can be estimated consistently with high probability. 
With the asymptotic normality established in the previous section, we thus can derive data-driven confidence intervals on $\widehat{\tau}(q)$. To streamline the presentation within this section, we focus on the case when $q$ is supported on the top-K entries, a class of estimands including cumulative lag-K effects.
We will provide estimates
$\widehat{\cV}_{Q}$ and $\widehat{\cV}_{L}$ for the quantities $\cV_{Q}$ and $\cV_{L}$ in the variance formula of Propositon~\ref{prop:second-moment}. Then provided $K=O(\log(T))$,
\begin{align*}
	\widehat{\tau}(q) \pm \mathcal{Z}_{1-\alpha/2} \sqrt{\tfrac{\widehat{\cV}_{Q} + \widehat{\cV}_{L}}{T}} \;.
\end{align*}
will be valid confidence intervals with asymptotic coverage $1-\alpha$.

To start with, we state a uniform consistency result.
\begin{theorem}[Uniform Consistency]
	\label{thm:uniform-consistency}
	Assume $|e_t| \leq M,~\forall t \geq 0$ and $ \| g \|_1 \leq C$ for absolute constants $C, M>0$.  Let $K$ be an integer that $K\ll T$, and recall $u_k, k\in [K]$ defined in \eqref{eqn:lag-K}.
	Then, with probability at least $1 - 4 T^{-2}$
	\begin{align*}
		\sup_{k < K} \left| \widehat{\tau}(u_k) - g_k \right| &\precsim \frac{\log(T)}{\sqrt{T}} \;,\\
		\left| \widehat{\tau}(\mathbf{1}_{<K}) - \langle \mathbf{1}_{<K} , g \rangle \right|  &\precsim \frac{K\log(T)}{\sqrt{T}} \;.
	\end{align*}
	Assume in addition that $|g_t| \precsim c^{t}$ with some positive constant $c<1$. Then with $K = K(T) = \Theta(\log(T))$,
	\begin{align*}
		\left| \widehat{\tau}(\mathbf{1}_{<K}) - \langle g, \mathbf{1} \rangle \right| \stackrel{a.s.}{\rightarrow} 0,  \quad \text{as $T\rightarrow \infty$} \;.
	\end{align*}
\end{theorem}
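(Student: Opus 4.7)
The plan is to combine the exact decomposition in~\eqref{eqn:decomp} with standard concentration inequalities for Rademacher chaos, then apply a union bound and Borel--Cantelli.

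First, I would specialize the decomposition to $q = u_k$. In that case the matrix $H$ of~\eqref{eqn:H-and-L} takes the particularly clean form
\[
	H_{ij} \;=\; g^{\circ}_{(i+k-j)\bmod T},
\]
i.e.\ a cyclic shift of the circulant matrix $\mathcal{C}_g$. This immediately gives
\[
	\|H\|_F^2 \;=\; T\,\|g\|_2^2 \;\leq\; T\,\|g\|_1^2 \;\leq\; T\,C^2,
	\qquad \|H\|_{\mathrm{op}} \;\leq\; \|g\|_1 \;\leq\; C,
\]
where the operator bound uses the fact that a circulant matrix has spectrum given by the DFT of $g$, bounded entrywise by $\|g\|_1$. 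For the linear term, $L_i = (q \ast h)^{\circ}_i$ with $h = \mathbf{1}\ast g + 2e$. Since $\|g\|_1 \leq C$ and $\|e\|_\infty \leq M$, we get $\|h\|_\infty \leq C + 2M$, hence $\max_i|L_i|\leq C+2M$ and $\|L\|_2^2 \leq T(C+2M)^2$.

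Second, I would deploy concentration. For the quadratic term $\mathbf{H}_T = \tfrac{1}{T}\sum_{i\neq j}\bz_i\bz_j H_{ij}$, the Hanson--Wright inequality for Rademacher chaos yields
\[
	\mathbb{P}\Bigl(|\mathbf{H}_T| \geq s\Bigr) \;\leq\; 2\exp\!\Bigl(-c\,\min\bigl(\tfrac{T^2 s^2}{\|H\|_F^2},\, \tfrac{T s}{\|H\|_{\mathrm{op}}}\bigr)\Bigr).
\]
Taking $s = c'\,\log(T)/\sqrt{T}$ with $c'$ large enough makes the right side at most $T^{-3}$. For the linear term $\mathbf{L}_T = \tfrac{1}{T}\sum_i \bz_i L_i$, Hoeffding gives $\mathbb{P}(|\mathbf{L}_T| \geq s)\leq 2\exp(-T s^2/(2(C+2M)^2))$, so the same choice $s \asymp \log(T)/\sqrt{T}$ yields probability $\leq T^{-3}$. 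A union bound over $k = 0,\dots,K-1$ costs an extra factor $K \leq T$, leaving the total failure probability at most $4T^{-2}$, which proves the first (supremum) bound.

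Third, I would derive the second bound by linearity: since $q\mapsto \widehat{\tau}(q)$ is linear in $q$, $\widehat{\tau}(\mathbf{1}_{<K}) = \sum_{k<K}\widehat{\tau}(u_k)$, so
\[
	\bigl|\widehat{\tau}(\mathbf{1}_{<K}) - \langle \mathbf{1}_{<K},g\rangle\bigr|
	\;\leq\; \sum_{k<K}|\widehat{\tau}(u_k)-g_k|
	\;\leq\; K\cdot \sup_{k<K}|\widehat{\tau}(u_k)-g_k|,
\]
which on the same good event is $\lesssim K\log(T)/\sqrt{T}$. Finally, for the almost sure statement under $|g_t|\lesssim c^t$, I would add and subtract $\langle \mathbf{1}_{<K},g\rangle$ and use the geometric tail $|\langle \mathbf{1},g\rangle - \langle \mathbf{1}_{<K},g\rangle|\leq \sum_{t\geq K}|g_t| \lesssim c^K/(1-c)$. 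Taking $K = K(T) = \Theta(\log T)$ with the hidden constant large enough makes $c^K$ polynomially small in $T$, while $K\log(T)/\sqrt{T} = O(\log^2 T/\sqrt T) \to 0$. Since both failure probabilities are summable ($\sum_T 4T^{-2} < \infty$), Borel--Cantelli converts the high-probability bound into almost sure convergence.

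\paragraph{Main obstacle.} The bulk of the argument is routine once the structural observation $H = \text{shifted circulant}(g)$ is made; the main subtlety is obtaining the $\|H\|_{\mathrm{op}}\leq \|g\|_1$ bound cleanly (so that Hanson--Wright does not blow up with $T$) and ensuring the union bound over $K$ values of $k$ only costs logarithmic factors, which is why we target per-coordinate failure probability $T^{-3}$ rather than $T^{-2}$.
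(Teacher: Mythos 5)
Your proposal is correct and follows essentially the same route as the paper: Hanson--Wright for the off-diagonal Rademacher chaos, a bounded-sum inequality (you use Hoeffding, the paper uses Bernstein) for the linear term, a union bound over $k<K$, triangle inequality for $\mathbf{1}_{<K}$, and a geometric-tail-plus-Borel--Cantelli argument for the almost sure claim. The only cosmetic difference is that you compute $\|H\|_F$ and $\|H\|_{\mathrm{op}}$ directly from the shifted-circulant structure at $q=u_k$, whereas the paper specializes a general tail lemma whose norms come from the variance identities of Proposition~\ref{prop:second-moment}; both yield the same bounds.
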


This Theorem can be proven by invoking the Hanson-Wright and Bernstein inequalities. Before proving the theorem, we record a useful inequality that explicitly captures the error rate of our estimator.
\begin{lemma}\label{lem:expoential-tail}
	With probability at least $1 - 4\exp(-\gamma)$, for any fixed $q$,
	\begin{align*}
		\left| \widehat{\tau}(q) - \langle g, q \rangle \right| &\precsim \sqrt{\frac{\| g \ast q \|_2^2 + \langle g \ast g, q \ast q \rangle}{T}} ( \sqrt{\gamma} \vee \gamma) \\
		& \quad + \sqrt{\frac{\| (\mathbf{1} \ast g + 2e) \ast q\|_2^2}{T^2}} \sqrt{\gamma} + \frac{\|q\|_1 \cdot \| \mathbf{1} \ast g + 2e \|_{\infty}}{T} \gamma \;.
	\end{align*}
\end{lemma}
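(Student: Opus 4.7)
The plan is to split the error through the decomposition $\widehat{\tau}(q) - \tau(q) = \bH_T + \bL_T$ recorded in~\eqref{eqn:decomp}, apply the Hanson--Wright inequality to the Rademacher chaos $\bH_T$ and Bernstein's inequality to the weighted Rademacher sum $\bL_T$, and then combine the two tail events via a union bound.

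For the quadratic piece I would write $\sum_{i \neq j} \bz_i \bz_j H_{ij} = \bz^\transp M \bz$ with $M$ the off-diagonal part of the symmetrization $\tfrac{1}{2}(A + A^\transp)$ of the circulant product $A = \mathcal{C}_q^\transp \mathcal{C}_g$. Two algebraic facts drive the bound: the diagonal of $A$ is the constant $\langle q, g \rangle$, so $\mathrm{tr}(M) = 0$ and the chaos is already centered; and the variance calculation underlying Proposition~\ref{prop:second-moment} gives $\sum_{i \neq j}(H_{ij}^2 + H_{ij} H_{ji}) = T\,\cV_Q$. Combining them yields
\begin{align*}
\|M\|_F^2 \;=\; \tfrac{1}{4}\sum_{i \neq j}(H_{ij} + H_{ji})^2 \;=\; \tfrac{1}{2}\sum_{i \neq j}(H_{ij}^2 + H_{ij}H_{ji}) \;\leq\; \tfrac{T}{2}\bigl(\|g \ast q\|_2^2 + \langle g \ast g, q \ast q\rangle\bigr),
\end{align*}
after dropping the nonpositive $-2\langle q,g\rangle^2$ contribution from $\cV_Q$. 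Using the loose operator-norm bound $\|M\|_{\mathrm{op}} \leq \|M\|_F$, Hanson--Wright (e.g.\ Rudelson--Vershynin) gives, with probability at least $1 - 2\exp(-\gamma)$, $\tfrac{1}{T}|\bz^\transp M \bz| \precsim \sqrt{(\|g \ast q\|_2^2 + \langle g \ast g, q \ast q\rangle)/T}\,(\sqrt{\gamma} \vee \gamma)$, which is the first displayed term of the lemma.

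For the linear piece $\bL_T = \tfrac{1}{T} \sum_i \bz_i L_i$ with $L_i = (q \ast h)_i$ and $h = \mathbf{1} \ast g + 2e$, Young's inequality gives $\max_i |L_i| \leq \|q\|_1 \|h\|_\infty$ while $\sum_i L_i^2 = \|(\mathbf{1} \ast g + 2e) \ast q\|_2^2$. Bernstein's inequality for a sum of bounded independent centered variables then yields, with probability at least $1 - 2\exp(-\gamma)$,
\begin{align*}
|\bL_T| \;\precsim\; \sqrt{\tfrac{\|(\mathbf{1} \ast g + 2e) \ast q\|_2^2}{T^2}}\sqrt{\gamma} \;+\; \tfrac{\|q\|_1\,\|\mathbf{1} \ast g + 2e\|_{\infty}}{T}\,\gamma,
\end{align*}
which are the remaining two displayed terms. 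A union bound and the triangle inequality then combine the two pieces into the claim with probability at least $1 - 4\exp(-\gamma)$. The main obstacle is arranging the Frobenius norm $\|M\|_F^2$ to come out in terms of $\|g \ast q\|_2^2 + \langle g \ast g, q \ast q\rangle$ rather than just $\|g \ast q\|_2^2$: the circulant Parseval identity alone yields $\|A\|_F^2 = T\|q \ast g\|_2^2$, and the extra cross-term $\sum_{i \neq j} H_{ij}H_{ji}$ produced by the symmetrization is precisely what supplies the $\langle g \ast g, q \ast q\rangle$ piece---this matching is exactly the content of Proposition~\ref{prop:second-moment}, which can be cited directly.
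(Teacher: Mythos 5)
Your proposal is correct and follows essentially the same route as the paper: the decomposition into the quadratic chaos $\bH_T$ and the linear term $\bL_T$, Hanson--Wright with the Frobenius-norm identity from Proposition~\ref{prop:second-moment} (dropping the nonpositive $-2\langle q,g\rangle^2$ term) and the crude bound $\|M\|_{\mathrm{op}}\leq\|M\|_F$ for the first, Bernstein with $\|L\|_2^2=\|(\mathbf{1}\ast g+2e)\ast q\|_2^2$ and $\|L\|_\infty\leq\|q\|_1\|\mathbf{1}\ast g+2e\|_\infty$ for the second, and a union bound. Your observation that the off-diagonal chaos is already centered (constant diagonal $\langle q,g\rangle$) is a slightly cleaner way of stating what the paper's proof uses implicitly.
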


\begin{proof}
Using~\eqref{eqn:decomp}, we have
\begin{align*}
\left| \widehat{\tau}(q) - \langle g, q \rangle \right| \leq  \tfrac{1}{T} \left|\sum_{i \neq j \in [T]} \bz_i \bz_j H_{ij}\right| + \tfrac{1}{T} \left|\sum_{i  \in [T]} \bz_i L_i \right|\;.
\end{align*}
We can bound each term separately. By the Hanson-Wright inequality, we have with probability at least $1 - 2\exp(-\gamma)$
\begin{align*}
		\left|\sum_{i \neq j} H_{ij} \bz_i \bz_j - \mathrm{Tr}(H)\right| \precsim \sqrt{\| H \|_{\mathrm{F}}^2 \gamma} + \| H \|_{\mathrm{op}} \gamma \;,
\end{align*}
As derived in the proof of Proposition~\ref{prop:second-moment},
\begin{align*}
	\left\| \tfrac{1}{2} (H + H^\transp) \right\|_{\mathrm{F}}^2 = \tfrac{T}{2} \big(\| g \ast q \|_2^2 + \langle g \ast g, q \ast q \rangle - 2\langle g, q \rangle^2 \big) \,.
\end{align*}
and we always also have $\| H + H^\transp  \|_{\mathrm{op}} \leq \| H + H^\transp  \|_{\mathrm{F}}$.

For the second term, with probability at least $1 - 2\exp(-\gamma)$, Bernstein's inequality asserts
\begin{align*}
		|\sum_{i} L_i \bz_i| \precsim \sqrt{\| L \|_2^2 \gamma} + \| L \|_{\infty} \gamma \;.
\end{align*}
In this case, we again know from Proposition~\ref{prop:second-moment} that $\| L \|_2^2 = \| (\mathbf{1} \ast g + 2e) \ast q\|_2^2$ and again bound $\| L \|_{\infty} \leq \| q \|_{1} \cdot \| \mathbf{1} \ast g + 2e \|_{\infty}$, completing the proof.
\end{proof}

Using this lemma, Theorem~\ref{thm:uniform-consistency} is immediate

\begin{proof}[Proof of Theorem~\ref{thm:uniform-consistency}]
	For $q = u_k$, one can immediately verify that
	\begin{align*}
		\| g \ast q \|_2^2 &= \sum_{i} (g_{i+k})^2 = \| g \|_2^2 \;, \\
		\langle g \ast g, q \ast q \rangle &\leq \sum_i g_{i} g_{2k-i} \leq \big(\sum_i g_{i}^2 \big)^{1/2}\big(\sum_i (g_{2k-i})^2 \big)^{1/2} =  \| g \|_2^2 \;.
	\end{align*}
	In addition,
	\begin{align*}
		\frac{\| (\mathbf{1} \ast g + 2e) \ast q\|_2^2}{T} = \frac{ \|  \mathbf{1} \ast g + 2e \|_2^2}{T} \leq  \| \mathbf{1} \ast g + 2e \|_{\infty}^2 \;.
	\end{align*}
	Lemma~\ref{lem:expoential-tail} reads:
	\begin{align*}
		 \left| \widehat{\tau}(u_k) - g_k \right| \precsim \|g \|_2 \frac{\sqrt{\gamma} \vee \gamma}{\sqrt{T}} + \| \mathbf{1} \ast g + 2e \|_{\infty} ( \sqrt{\frac{\gamma}{T}} \vee \frac{\gamma}{T} ) \;,
	\end{align*}
	with probability at least $1-4\exp(-\gamma)$. We complete the proof by applying union bound and plug in $\gamma = \log(T^2 K)$. To obtain the almost sure statement $\widehat{\tau}(\mathbf{1}_{<K}) - \langle g, \mathbf{1} \rangle \stackrel{a.s.}{\rightarrow} 0$, we take $K = \Theta(\log(T))$ and apply the Borel-Cantelli lemma since $\sum_{T} T^{-2} < \infty$.

\end{proof}

\paragraph{Plug-in estimate of variance}
Now we are ready to state the data-driven estimators for the terms $\cV_{Q}$ and $\cV_{L}$ in the case of convolution. 
The estimators are essentially the same as what was presented for linear convolution. Denote
\begin{align*}
	\widehat{g}_{<K} = (\overbrace{\widehat{\tau}(u_0), \ldots, \widehat{\tau}(u_{K-1})}^{\text{first $K$ entries}}, 0 \ldots, 0 )
\end{align*}
and define the plug-in estimate of $\cV_Q$
\begin{align}
	\label{eqn:est-VQ}
	\widehat{\cV}_{Q}  &= \| \widehat{g}_{<K} \ast q \|_2^2 + \langle  \widehat{g}_{<2K} \ast  \widehat{g}_{<2K}, q \ast q \rangle - 2 \langle  q, \widehat{g}_{<K} \rangle^2
\end{align}
For the term $\cV_L$, we are going to design a plug-in estimate. We introduce the estimator for the error vector
	\begin{align*}
		\widehat{e}_t := \by_t - (\bx \ast \widehat{g}_{<K})_t \;,
	\end{align*}
and define
\begin{align}
	\label{eqn:est-VL}
	\widehat{\cV}_{L}  &= \tfrac{1}{T}  \| (\mathbf{1} \ast \widehat{g}_{<K} + 2 \widehat{e}) \ast  q \|_2^2 \;. 
\end{align}
Conceptually, the plug-in estimate of $\cV_{Q}, \cV_{L}$ is to replace $g \leftarrow \widehat{g}_{<K}$, and to replace $e \leftarrow \widehat{e}:= \by - \bx \ast \widehat{g}_{<K}$.

\begin{proposition}[Validity of the Plug-in Estimate of Variance]
	\label{prop:plug-in-estimate}
	Assume $q$ is supported on the top-K entries with $\| q \|_{\infty} \leq 1$, and that $|g_t| \precsim c^{t}$ with some positive constant $c<1$. Then with $K = K(T) = \Theta(\log(T))$, as $T\rightarrow \infty$
	\begin{align*}
		\widehat{\cV}_{Q}  \stackrel{a.s.}{\rightarrow} \cV_{Q} \;,\\
		\widehat{\cV}_{L}  \stackrel{a.s.}{\rightarrow} \cV_{L} \;.
	\end{align*}
\end{proposition}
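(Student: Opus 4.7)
The plan is to strengthen Theorem~\ref{thm:uniform-consistency} into almost-sure convergence of $\widehat{g}_{<K}$ in the $\ell_1$, $\ell_2$, and $\ell_\infty$ norms, and then to verify that each ingredient of $\widehat{\cV}_Q$ and $\widehat{\cV}_L$ is a continuous functional of $g$ and $e$ in those norms, with continuity constants controlled by $\|q\|_1 = O(K)$ and $\|q\|_2 = O(\sqrt{K})$. The choice $K = \Theta(\log T)$ guarantees that all $\mathrm{polylog}(T)$ factors multiplying the $\log(T)/\sqrt{T}$ rate from Theorem~\ref{thm:uniform-consistency} still tend to zero.

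\textbf{Step 1: Consistency of $\widehat{g}$.} Apply Theorem~\ref{thm:uniform-consistency} with window $2K$ (the proof is identical). Since the failure probabilities $4T^{-2}$ are summable, Borel--Cantelli gives $\sup_{k < 2K}|\widehat{\tau}(u_k) - g_k| = O(\log(T)/\sqrt{T})$ a.s. The decay $|g_t|\precsim c^t$ makes the tail $\sum_{t \geq 2K}|g_t|$ polynomially small in $T$, dominated by $\log(T)/\sqrt{T}$ when $K=\Theta(\log T)$ with a large enough constant. Combining these yields, almost surely, $\|\widehat{g}_{<K} - g\|_\infty \precsim \log(T)/\sqrt{T}$, $\|\widehat{g}_{<K} - g\|_2 \precsim (\log T)^{3/2}/\sqrt{T}$, and $\|\widehat{g}_{<K} - g\|_1 \precsim (\log T)^2/\sqrt{T}$, with the same bounds for $\widehat{g}_{<2K}$.

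\textbf{Step 2: Convergence of $\widehat{\cV}_Q$.} Handle each of the three terms in $\cV_Q$ separately via Young's inequality for circular convolution, $\|a\ast b\|_2 \leq \|a\|_1\|b\|_2$ and $\|a\ast b\|_\infty \leq \|a\|_2\|b\|_2$. The term $\bigl|\|\widehat{g}_{<K}\ast q\|_2^2 - \|g\ast q\|_2^2\bigr|$ is controlled by $\|(\widehat{g}_{<K}-g)\ast q\|_2 \leq \|\widehat{g}_{<K}-g\|_1 \|q\|_2$ multiplied by a bounded factor. The quadratic-in-$g$ term is handled by the telescoping identity $\widehat{g}_{<2K}\ast\widehat{g}_{<2K} - g\ast g = (\widehat{g}_{<2K}-g)\ast\widehat{g}_{<2K} + g\ast(\widehat{g}_{<2K}-g)$, then Cauchy--Schwarz against $q\ast q$ (with $\|q\ast q\|_1 \leq \|q\|_1^2$). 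The remaining term obeys $|\langle q,\widehat{g}_{<K}\rangle^2 - \langle q, g\rangle^2| \leq 2\,\|q\|_1^2\|\widehat{g}_{<K}-g\|_\infty\|g\|_\infty$. All three differences are a.s.\ $\mathrm{polylog}(T)/\sqrt{T}$ by Step~1.

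\textbf{Step 3: Convergence of $\widehat{\cV}_L$; the main obstacle.} Substituting $\widehat{e} = e + \bx\ast(g-\widehat{g}_{<K})$ and $\bz = 2\bx - \mathbf{1}$ yields the key identity
\[
\mathbf{1}\ast\widehat{g}_{<K} + 2\widehat{e} = (\mathbf{1}\ast g + 2e) - \bz\ast(\widehat{g}_{<K}-g).
\]
Setting $a = (\mathbf{1}\ast g + 2e)\ast q$ and $b = -\bz\ast(\widehat{g}_{<K}-g)\ast q$, one has $\widehat{\cV}_L - \cV_L = (2\langle a, b\rangle + \|b\|_2^2)/T$. The main obstacle is that $b$ depends on $\bz$ both explicitly and implicitly through $\widehat{g}_{<K}$, so the randomness cannot be split by conditioning, and one cannot exploit the expectation identity $\E[\|\bz\ast u\|_2^2]=T\|u\|_2^2$ for fixed $u$. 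The remedy is the deterministic Young bound $\|b\|_2 \leq \|\bz\|_2 \cdot \|(\widehat{g}_{<K}-g)\ast q\|_1 \leq \sqrt{T}\,\|\widehat{g}_{<K}-g\|_1 \|q\|_1$, which holds pathwise and thus transfers the a.s.\ rate from Step~1. Combined with $\|a\|_2 \leq \sqrt{T}\,(\|\mathbf{1}\ast g\|_\infty + 2\|e\|_\infty)\|q\|_1 = O(\sqrt{T}\,K)$, both the cross term and the square term are a.s.\ $\mathrm{polylog}(T)/\sqrt{T}$, yielding $\widehat{\cV}_L \to \cV_L$ a.s. This is precisely why the hypothesis restricts $q$ to a $K$-element support with $K=\Theta(\log T)$: the $\ell_1$ consistency of $\widehat{g}_{<K}$ must survive multiplication by $\|q\|_1$.
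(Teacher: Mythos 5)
Your proposal is correct and follows essentially the same route as the paper's proof: term-by-term plug-in consistency driven by the uniform consistency of $\widehat{g}_{<K}$ (Theorem~\ref{thm:uniform-consistency} plus Borel--Cantelli), exponential-decay truncation of the tails of $g$, and a purely pathwise (deterministic) bound on the discrepancy $\widehat{e}-e$ that sidesteps the dependence of $\widehat{g}_{<K}$ on the randomization. The only difference is packaging --- you organize the estimates through Young's inequality for circular convolutions and the identity $\mathbf{1}\ast\widehat{g}_{<K}+2\widehat{e}=(\mathbf{1}\ast g+2e)-\bz\ast(\widehat{g}_{<K}-g)$, whereas the paper argues entrywise with $\sup_t|\widehat{e}_t-e_t|$ --- which does not change the substance of the argument.
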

With the asymptotic normality established in the previous section and the above proposition asserting the validity of the plug-in estimate of variance, we can define the data-driven confidence interval with coverage $1-\alpha$:
\begin{align*}
	\widehat{\tau}(q) \pm C_{\alpha} \sqrt{\tfrac{\widehat{\cV}_{Q} + \widehat{\cV}_{L}}{T}} \;.
\end{align*}

\section{Empirics}

In this section, we demonstrate a simple empirical example illustrating two theoretical results established, the
asymptotic normality in Section~\ref{sec:asmptotic-normality}, and the asymptotic equivalence between circular and linear convolution in Section~\ref{sec:circular_vs_linear}. Besides validating the theory, the empirical example also demonstrates how asymptotic characterizations become accurate as $T$ grows.

We fix an impulse response $g_t = 1.00*0.65^t -1.60*0.50^t+0.75*0.48^t,~ t\in [T]$ and set $K=25$ in constructing the estimator $\widehat{\tau}_K$ defined in \eqref{eqn:estimator-lag-K}. For this impulse response, it is clear that the signals are negligible after $K=25$. We consider two data-generating processes (DGP), the circular convolution model in \eqref{eqn:circular-conv-model} (see Fig.~\ref{fig:circular-convolution}), and the linear convolution model in \eqref{eqn:linear-conv-model} (see Fig.~\ref{fig:linear-convolution}), both with $e = 0$. For each DGP, we sample the impulse/treatment vector $\bx \in \{0,1\}^T$ with i.i.d. Bernoulli with $T=200, 1000, 5000$ and then observe the response $\by \in \mathbb{R}^T$, to illustrate the asymptotic behavior. For each DGP and $T$, we run 20000 Monte Carlo simulations to show the histogram of $\widehat{\tau}_K$. 

\begin{figure}
     \centering
     \begin{subfigure}[b]{0.32\textwidth}
         \centering
         \includegraphics[width=\textwidth]{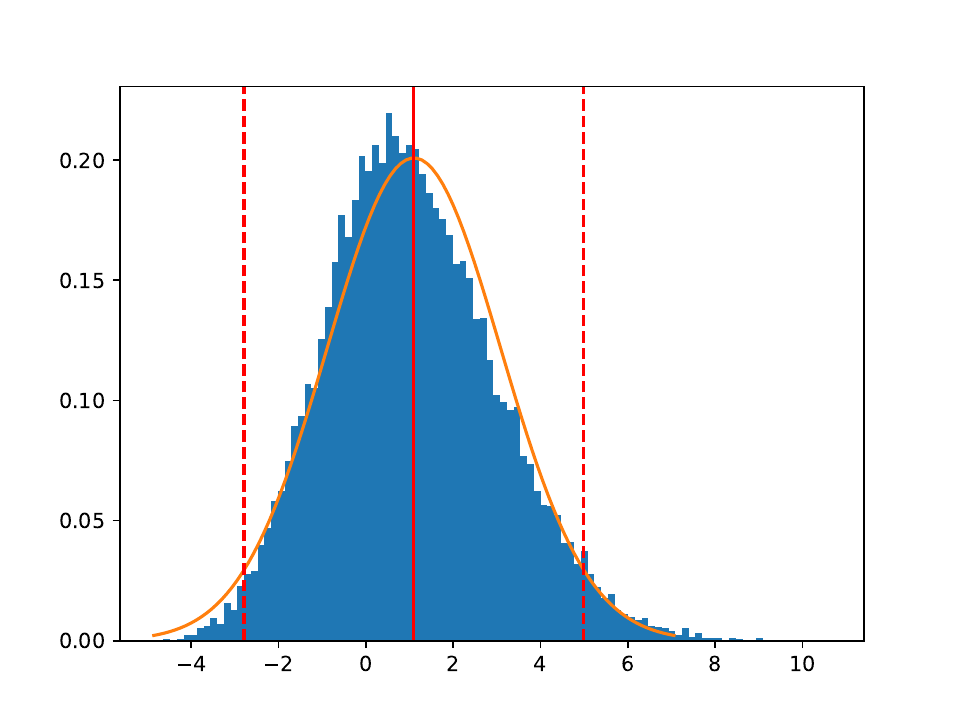}
         \caption{$T=200$}
         \label{fig:circular-T-200}
     \end{subfigure}
     \hfill
     \begin{subfigure}[b]{0.32\textwidth}
         \centering
         \includegraphics[width=\textwidth]{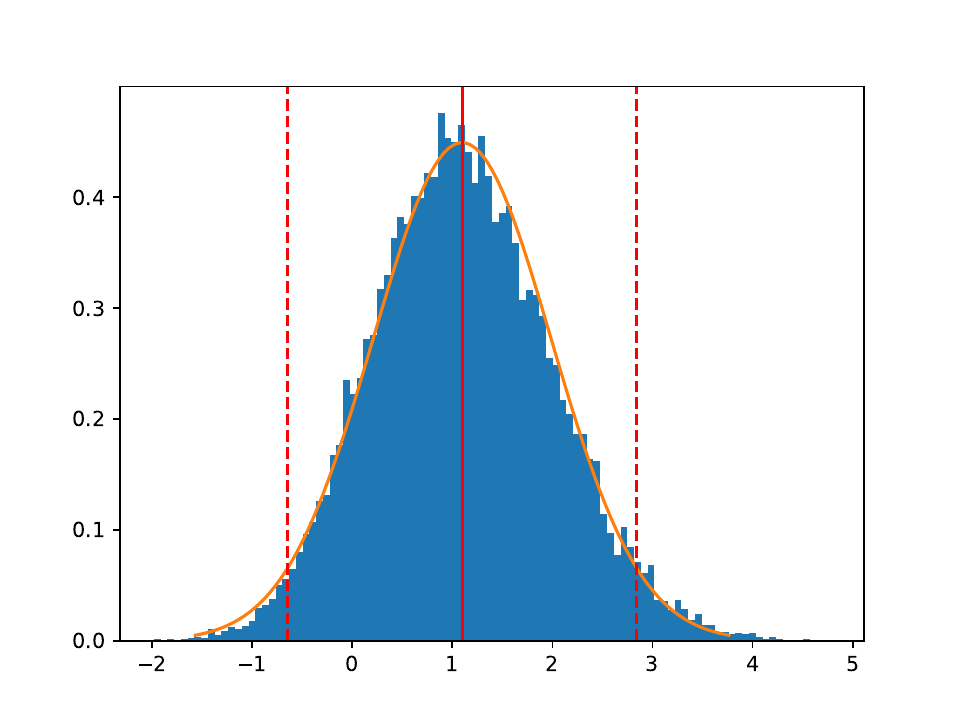}
         \caption{$T=1000$}
         \label{fig:circular-T-1000}
     \end{subfigure}
     \hfill
     \begin{subfigure}[b]{0.32\textwidth}
         \centering
         \includegraphics[width=\textwidth]{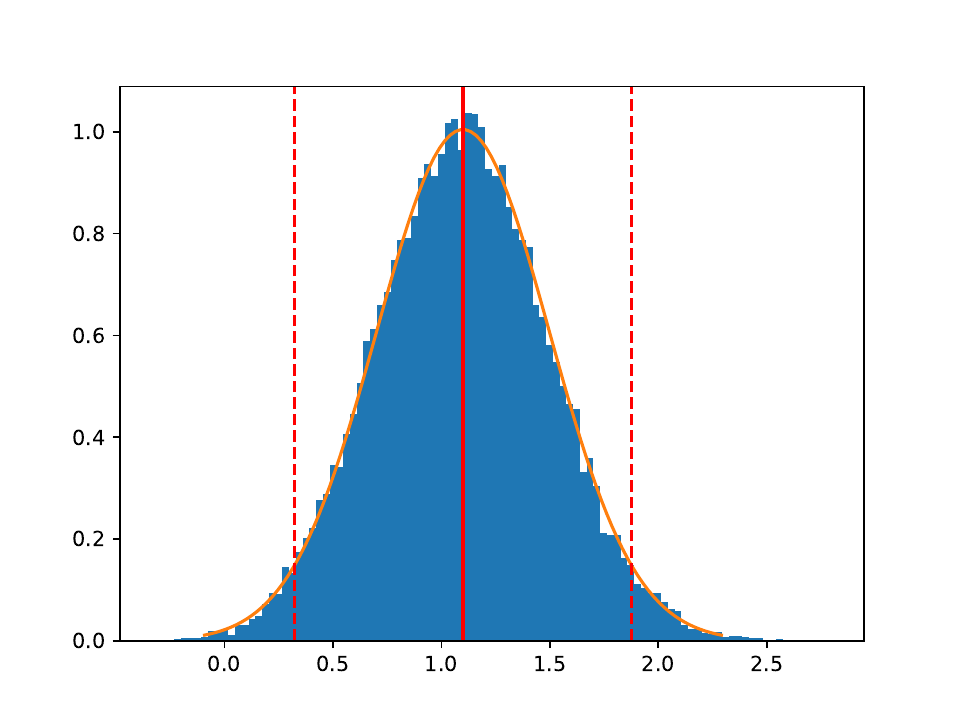}
         \caption{$T=5000$}
         \label{fig:circular-T-5000}
     \end{subfigure}
        \caption{Circular convolution model $\by = \bx \ast g$: we plot the histogram of $\widehat{\tau}(\mathbf{1}_{<K})$ defined in \eqref{eqn:estimator-lag-K} based on 20000 Monte Carlo simulations, the red solid vertical line corresponds to $\tau(\mathbf{1}_{<K})$ defined in \eqref{eqn:lag-K}, the red dashed lines are two standard deviations from the expectation using the variance formula in Proposition~\ref{prop:second-moment}. }
        \label{fig:circular-convolution}
\end{figure}

\begin{figure}
     \centering
     \begin{subfigure}[b]{0.32\textwidth}
         \centering
         \includegraphics[width=\textwidth]{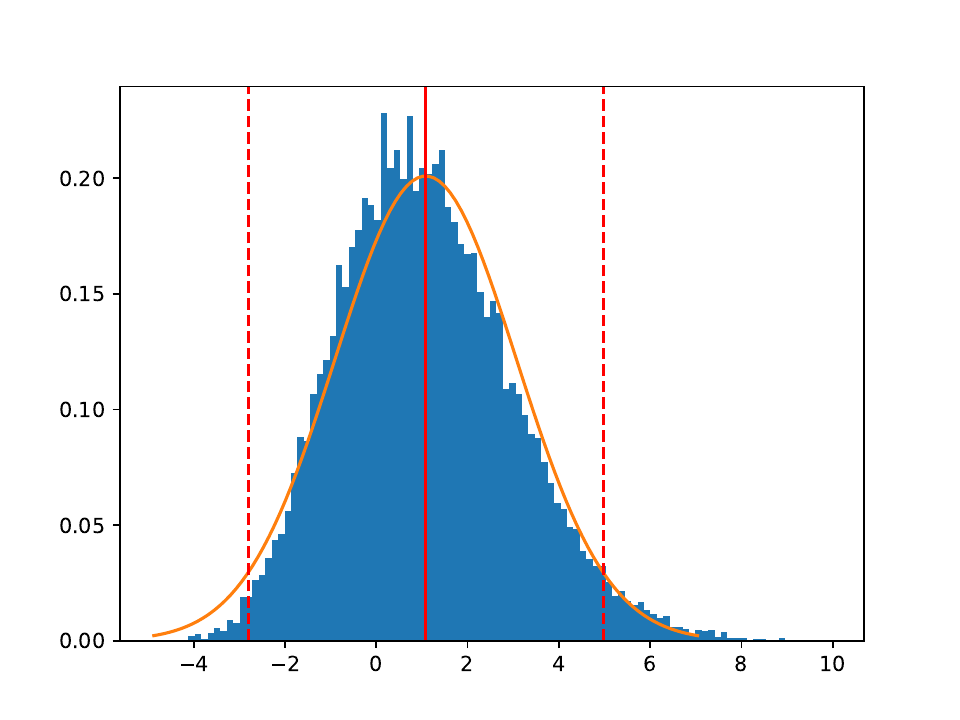}
         \caption{$T=200$}
         \label{fig:linear-T-200}
     \end{subfigure}
     \hfill
     \begin{subfigure}[b]{0.32\textwidth}
         \centering
         \includegraphics[width=\textwidth]{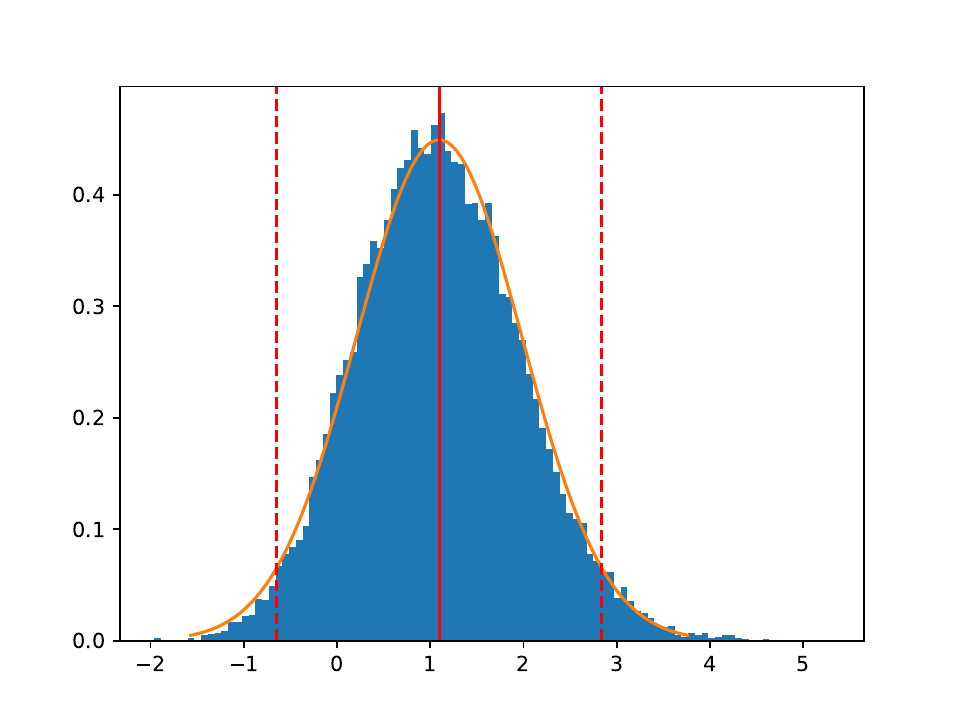}
         \caption{$T=1000$}
         \label{fig:linear-T-1000}
     \end{subfigure}
     \hfill
     \begin{subfigure}[b]{0.32\textwidth}
         \centering
         \includegraphics[width=\textwidth]{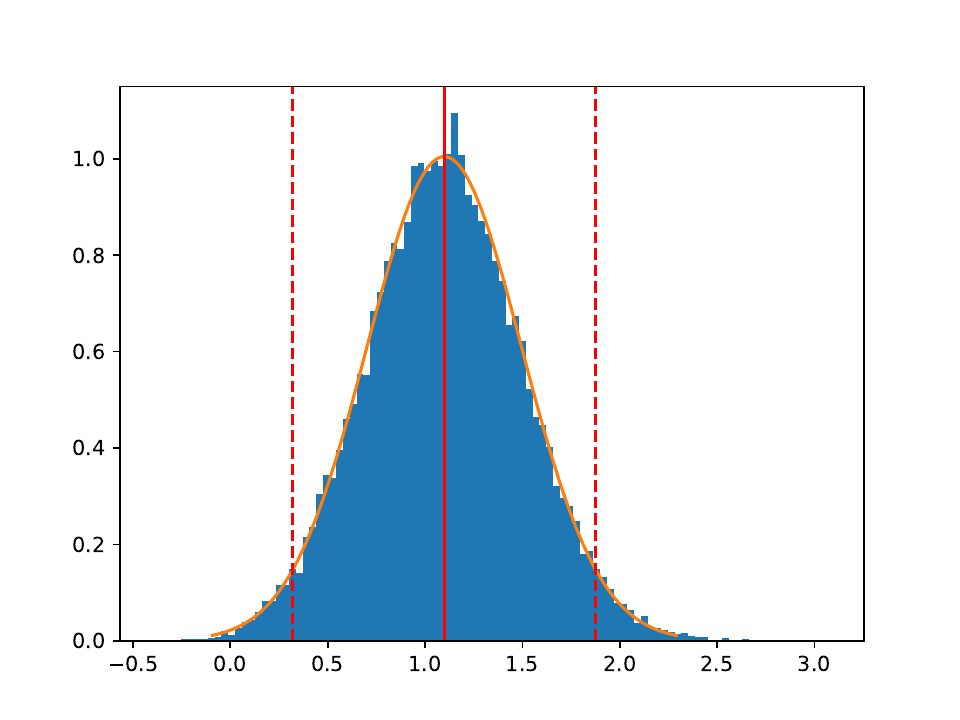}
         \caption{$T=5000$}
         \label{fig:linear-T-5000}
     \end{subfigure}
        \caption{Linear convolution model $\by = \bx * g$: we plot the histogram of $\widehat{\tau}^{\mathcal{L}}(\mathbf{1}_{<K})$ defined in \eqref{eqn:estimator-lin} based on 20000 Monte Carlo simulations, the red solid vertical line corresponds to $\tau^{\mathcal{L}}(\mathbf{1}_{<K})$ defined in \eqref{eqn:estimand-lin}, the red dashed lines are two standard deviations from the expectation using the variance formula in Proposition~\ref{prop:second-moment}.}
        \label{fig:linear-convolution}
\end{figure}

A few observations follow from the simulations in Fig.~\ref{fig:circular-convolution} and ~\ref{fig:linear-convolution}. First, the linear convolution and circular convolution are asymptotically equivalent, as shown in Section~\ref{sec:circular_vs_linear}. Here for each $T$, the distributions of $\widehat{\tau}(\mathbf{1}_{<K})$ and $\widehat{\tau}^{\mathcal{L}}(\mathbf{1}_{<K})$ are indiscernible, especially for large $T$. Second, the variance formulae in circular convolution, as shown in Proposition~\ref{prop:second-moment}, correctly provide the $95\%$ coverage even in the linear convolution model, seen in Fig.~\ref{fig:linear-convolution}. Even for $T=200$, Fig.~\ref{fig:linear-T-200} demonstrate a $95.49\%$ coverage, implying that the variance formula we derive under circular convolution remains an accurate approximation for the linear convolution model. Third, for each Figure, as $T$ increases, the distributions approach an asymptotic normal distribution, validating the asymptotic normality results established in Section~\ref{sec:asmptotic-normality}.

\section{Discussion and Future Work}

Formalizing a statistical framework for N-of-1 designs using dynamical systems provides a convenient modeling test-bed for within-subject analyses. Our initial results here suggest that this project is feasible, and that it is relatively simple to write software to compute treatment effects under relatively simple dynamic interference patterns. However, it is mildly discouraging that even the most simple linear time-invariant model is already quite complicated. We hope that our initial work here encourages investigation into simpler designs with simpler analyses. Or, perhaps, theoretical arguments will be made as to why such simplicity is impossible.

As this paper is just a first step towards such a theory of N-of-1 designs, there are plenty of open questions that must be answered to solidify the statistical foundations. First, in this work, we only consider the simplest randomization design, toggling treatment randomly at each time step. An analysis of balanced designs would likely immediately reduce variance in simple settings. Moreover, more clever experiment designs could potentially mitigate the long-term dependencies inherent to the statistical model. Such designs could potentially reveal more information about the system from shorter observation times. 

Another avenue for future work would be determining the optimal timing of interventions. Our statistical models are continuous processes sampled at a particular cadence of uniform discrete intervals. What is the best choice of this sampling time in order to most rapidly infer treatment effects? For example, in the case of medications, there will be time associated with the onset of the effect of treatment and time associated with the decay of the effect. Sampling too early or late will understate the impact of the treatment. In order to maximize the signal in an experiment, how can we determine the optimal intervention time? Can novel randomization schemes or adaptive designs help identify the optimal timing?

Finally, we hope that future work will investigate extensions beyond the linear time-invariant model studied in this work. We expect it would require little effort to extend the time-invariant treatment effects of this paper to study those where the response varies at each time step
$$
	y_t = \sum_{s=0}^{t-1} g_s(t-s) x_s\,.
$$
This would make our framework a full generalization of SUTVA potential outcomes. That is, with this particular time-variant model, the SUTVA model is recovered when $g_k(t)=0$ for all $t>0$. The main stumbling block is how to generalize the estimate of the variance of the treatment effect for this model. Extensions to nonlinear models would also add to the generality of this framework. It could be useful to appeal to the control literature to understand how to model nonlinear dose responses and saturation effects between doses. Future work will benefit from such further synergies between causal inference and mathematical control.

\section*{Acknowledgements}
The authors would like to thank Peng Ding, Avi Feller, Max Simchowitz, Panos Toulis, and Ruey Tsay for helpful suggestions and pointers to related work. This work was supported in part by ONR Award N00014-20-1-2497, NSF CCF CIF Award 2326498, and NSF Career Award (DMS-2042473). Additionally, TL acknowledges the generous support of the William Ladany Faculty Fellowship from the University of Chicago Booth School of Business and BR the generous support of the Miller Institute for Basic Research in Science.


{\footnotesize
\bibliography{ref.bib} 
\bibliographystyle{plainnat}
}

\appendix

\section{Technical Proofs}

\subsection{Moments of Rademacher Chaos}\label{app:chaos}

\begin{proposition}
	\label{prop:edge-indices-graph}
Let $W$ be a $d\times d$ symmetric matrix and let $\bz$ be a vector of $d$ independent Rademacher random variables. Consider the random variable $\bG :=  \sum_{i<j} \bz_i \bz_j W_{ij}$.
Then 
	\begin{align*}
		\E[\bG^2] = \sum_{i<j} W^2_{ij}
	\end{align*}
	and
	\begin{align*}
		\E [\bG^4] = 3 \big( \E [\bG^2] \big)^2 -  \sum_{i \neq j} W^4_{ij} + 3  \sum_{i \neq j \neq k \neq l} W_{ij} W_{jk} W_{kl} W_{li} \;.
	\end{align*}
	
\end{proposition}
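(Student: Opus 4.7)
The plan is to expand $\bG^2$ and $\bG^4$ as sums over tuples of edge-indices and then exploit the fact that, since the $\bz_i$'s are independent $\pm 1$ and $\bz_i^2=1$, the expectation of a monomial $\bz_{i_1}\cdots\bz_{i_m}$ equals $1$ when every index occurs an even number of times and vanishes otherwise. I shall therefore view each tuple $(e_1,\ldots,e_m)$, with $e_a=\{i_a,j_a\}$ and $i_a<j_a$, as an edge multigraph on $[d]$, and keep only the tuples whose multigraph has every vertex of even degree (``Eulerian''). For the second moment this is immediate: $\bG^2=\sum_{i<j,\,k<l}W_{ij}W_{kl}\bz_i\bz_j\bz_k\bz_l$, and the only Eulerian 2-edge multigraphs have $(i,j)=(k,l)$, giving $\E[\bG^2]=\sum_{i<j}W_{ij}^2$.

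For the fourth moment, I classify the Eulerian 4-edge multigraphs by the multiset of distinct edges used: (i) one edge with multiplicity $4$; (ii) two distinct edges $A\neq B$ each with multiplicity $2$; (iii) three distinct edges with one doubled; (iv) four pairwise distinct edges. Case~(iii) is vacuous, since the two singleton edges must themselves form an Eulerian subgraph on two edges, forcing them to coincide. In Case~(iv) the total degree is $8$ over at most $4$ non-isolated vertices, so the only feasible even-degree sequence is $(2,2,2,2)$, which means the four distinct edges form a simple $4$-cycle on four distinct vertices. I will then count orderings: Case~(i) gives $\sum_e W_e^4$; Case~(ii) has $\binom{4}{2}=6$ ways to place the two copies of $A$, summing to $6\sum_{\{A,B\},\,A\neq B}W_A^2 W_B^2 = 3\sum_{A\neq B}W_A^2 W_B^2$; and Case~(iv) yields $4!=24$ orderings of the four edges of each undirected $4$-cycle. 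Using that each undirected $4$-cycle is traversed by exactly $8$ ordered distinct quadruples $(i,j,k,l)$ (four starting points times two directions), I rewrite $24\sum_{\text{cycles}}W_A W_B W_C W_D = 3\sum_{i\neq j\neq k\neq l} W_{ij}W_{jk}W_{kl}W_{li}$.

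Finally I will read off the claimed identity by combining the three surviving contributions with $3(\E[\bG^2])^2 = 3\sum_e W_e^4 + 3\sum_{A\neq B}W_A^2 W_B^2$ and the symmetry $\sum_{i\neq j}W_{ij}^4 = 2\sum_e W_e^4$. The subtractive term $-\sum_{i\neq j}W_{ij}^4$ appears precisely because Case~(i) contributes $\sum_e W_e^4$ whereas $3(\E[\bG^2])^2$ charges $3\sum_e W_e^4$, leaving a deficit of $2\sum_e W_e^4$. I expect the main obstacle to be the combinatorial bookkeeping in Case~(iv) — ruling out all non-cycle configurations on four distinct edges and converting the per-cycle count $4!$ into the symmetric sum over ordered distinct quadruples — rather than any analytic step; the rest is just listing cases and collecting like terms.
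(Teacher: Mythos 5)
Your proposal is correct and follows essentially the same route as the paper: expand $\bG^4$ over $4$-tuples of edges, use the even-degree (Eulerian) criterion to identify the surviving configurations, and count the same three cases (a quadrupled edge, two doubled edges, and a simple $4$-cycle) with the same multiplicities $1$, $3$, and $24$ versus $8$ ordered quadruples per cycle. The only difference is that you explicitly rule out the three-distinct-edge case and justify why four distinct edges must form a $4$-cycle, details the paper merely asserts.
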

\begin{proof}[Proof of Proposition~\ref{prop:edge-indices-graph}]
	We can define a graph $\cG = (\cV, \cE)$ on $|\cV| = T$ indices, and the edge weight $W_{e}:=W_{ij}, ~e \in \cE$ denotes an undirected weight between nodes $i\leftrightarrow j$. Then
	\begin{align*}
		\bG :=  \sum_{e \in \cE} \bz_{e} W_{e}
	\end{align*}
	where for an edge $e: i\leftrightarrow j$, $\bz_{e}:= \bz_{i}\bz_{j}$. Note that $\E[\bz_{e} \bz_{e'}]=0$ if $e\neq e'$. 	With this edge index notation, the second moment can be written as
	\begin{align*}
		\E [\bG^2] = \sum_{e \in \cE} W_{e}^2 + \sum_{e_1 \neq e_2 \in \cE} \E[z_{e_1} z_{e_2}]  W_{e_1}  W_{e_2} = \sum_{e \in \cE} W_{e}^2 \;,
\end{align*}

The edge indices prove convenient in evaluating the fourth moment. There are only three cases where $\E[\bz_{e_1} \bz_{e_2}\bz_{e_3}\bz_{e_4}]$ can be nonzero. First, if all of these edges are equal. Second, if there are two pairs of equal edges. Third, if there are four distinct edges that form a cycle.

Thus we have
\begin{align*}
		\E [\bG^4] = \sum_{e \in \cE} W_{e}^4 +  \frac{\binom{4}{2} \binom{2}{2}}{2!}\sum_{e_1 \neq e_2 \in \cE}   W_{e_1}^2  W_{e_2}^2 + \sum_{e_1 \neq e_2 \neq e_3 \neq e_4 \in \cE:~ \text{cycle}} W_{e_1}  W_{e_2}  W_{e_3}  W_{e_4} \;.
	\end{align*}
	
The sum of the first two terms equals $\E [\bG^4] = 3 \big( \E [\bG^2] \big)^2 -  \sum_{i \neq j} W^4_{ij}$. The last term is equal to 
\begin{align*}
	 3  \sum_{i \neq j \neq k \neq l} W_{ij} W_{jk} W_{kl} W_{li} 
\end{align*}
To derive this, fix four distinct points $i,j,k,l$ (say $i=1,j=2,k=3,l=4$). Then then there will be $ \frac{4!}{4*2} = 3$ distinguishable edge arrangements ($1\leftrightarrow 2 \leftrightarrow 3 \leftrightarrow 4 \leftrightarrow 1$, $1 \leftrightarrow 2 \leftrightarrow 4 \leftrightarrow 3 \leftrightarrow 1$, $1 \leftrightarrow 4 \leftrightarrow 2 \leftrightarrow 3 \leftrightarrow 1$) that form a cycle. In total, there will be $3 \times 4! = 72$ terms that involve node indices $1,2,3,4$, which can be grouped into
	\begin{align*}
		 \sum_{\substack{e_1 \neq e_2 \neq e_3 \neq e_4 \in \cE:\\~ \text{cycle with indices \{1,2,3,4\}}}} W_{e_1}  W_{e_2}  W_{e_3}  W_{e_4}  =24  W_{12} W_{23} W_{34} W_{41} + 24  W_{12} W_{24} W_{43} W_{31} + 24  W_{14} W_{42} W_{23} W_{31} \;.
	\end{align*}
On the other hand, in the summation over indices, there will be $4!=24$ terms with the index set $\{1,2,3,4\}$:	
	\begin{align*}
		\sum_{\substack{i \neq j \neq k \neq l:\\~ \{i,j,k,l\} = \{1,2,3,4\}}} W_{ij} W_{jk} W_{kl} W_{li} = 8  W_{12} W_{23} W_{34} W_{41} + 8  W_{12} W_{24} W_{43} W_{31} + 8  W_{14} W_{42} W_{23} W_{31} \;.
	\end{align*}
This verifies the desired moment formula.


	%
	
\end{proof}

The following Lemma specializes these moment calculations to a particular form of Rademacher chaos that appears several times in this work.

\begin{lemma}\label{lemma:chaos}
Let $M$ be a matrix and $\bz$ be a vector of independent. Rademacher random variables. Then 
$$
	\E[ \langle \bz, M \bz \rangle] = \trace(M)
$$
and
$$
	\E[(\langle \bz, M \bz \rangle-\trace(M))^2]
		= \|M\|_F^2 + \trace(M^2) - 2 \sum_{i} M_{ii}^2 
$$	
\end{lemma}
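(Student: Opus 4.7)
The plan is to expand the quadratic form entrywise as $\langle \bz, M\bz\rangle = \sum_{i,j} M_{ij}\bz_i\bz_j$ and rely on two elementary facts about independent Rademacher variables: $\bz_i^2 = 1$ almost surely, and $\E[\bz_{i_1}\cdots\bz_{i_k}]$ equals $1$ when every index in the multiset $\{i_1,\ldots,i_k\}$ occurs an even number of times and vanishes otherwise. Every calculation then reduces to counting index patterns, with no analytic content. For the mean, only the diagonal terms $i=j$ have nonzero expectation, immediately giving $\sum_i M_{ii} = \trace(M)$.

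For the second moment, I would first subtract the mean to obtain the off-diagonal form $Y := \langle \bz, M\bz\rangle - \trace(M) = \sum_{i\neq j}M_{ij}\bz_i\bz_j$, and then expand $Y^2$ as a quadruple sum over $(i,j,k,l)$ with the constraints $i\neq j$ and $k\neq l$. The nonvanishing terms are precisely those where $\{i,j\} = \{k,l\}$ as sets, and distinctness within each pair forces either $(k,l)=(i,j)$ or $(k,l)=(j,i)$. These two cases contribute $\sum_{i\neq j}M_{ij}^2$ and $\sum_{i\neq j}M_{ij}M_{ji}$ respectively. Translating back into standard matrix invariants via $\|M\|_F^2 = \sum_{i,j}M_{ij}^2$ and $\trace(M^2) = \sum_{i,j}M_{ij}M_{ji}$ (each of which counts the diagonal contribution $\sum_i M_{ii}^2$ once) then yields the stated formula.

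The main obstacle here is bookkeeping rather than mathematics: one must ensure the two case-contributions are not double counted, and that the diagonal terms $M_{ii}^2$ are subtracted exactly twice, once from $\|M\|_F^2$ and once from $\trace(M^2)$, to match the off-diagonal restriction in $Y$. An alternative that sidesteps some of this bookkeeping is to symmetrize by setting $W := \tfrac{1}{2}(M+M^T)$ and observing $\langle \bz, M\bz\rangle = \langle \bz, W\bz\rangle$; one can then invoke Proposition~\ref{prop:edge-indices-graph} applied to the symmetric matrix $W$ to conclude that the centered second moment equals $2\sum_{i\neq j}W_{ij}^2$, which after expanding $(M_{ij}+M_{ji})^2/4$ and summing produces the same expression. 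Either route leads to the lemma in a few lines.
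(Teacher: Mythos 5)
Your proposal is correct. The alternative route you sketch at the end---symmetrizing to $W=\tfrac{1}{2}(M+M^\transp)$ and invoking Proposition~\ref{prop:edge-indices-graph}---is exactly the paper's proof: the paper writes the centered form as $\sum_{i<j}\bz_i\bz_j(M_{ij}+M_{ji})$ and applies that proposition to $M+M^\transp$, which matches your version up to the factor of $2$ that you correctly track. Your primary route, the direct quadruple-sum expansion with the parity-counting rule for $\E[\bz_{i}\bz_{j}\bz_{k}\bz_{l}]$, is an equally valid and fully self-contained verification; the only thing it buys is independence from Proposition~\ref{prop:edge-indices-graph}, at the cost of the index bookkeeping you already flag (and handle correctly: each of $\|M\|_F^2$ and $\trace(M^2)$ contributes one copy of $\sum_i M_{ii}^2$ to be removed).
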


\begin{proof}[of Lemma~\ref{lemma:chaos}]
The first moment is an elementary calculation. 
For the second moment, we can write
$$
	\langle \bz, M \bz \rangle-\trace(M) = \sum_{i<j} z_i z_j (M_{ij}+M_{ji})  \,.
$$
Apply Proposition~\ref{prop:edge-indices-graph} to the matrix $W=M+M^\transp$ to complete the proof.
\end{proof}

\subsection{Properties of Circular Convolution}

We collect a key fact about circular convolution that will be repeatedly used.

\begin{proposition}[Circular Symmetry]
	\label{prop:circular-symmetry}
	For any two sequence $a, b \in \mathbb{R}^T$, then for any $t, s \in [T]$
	\begin{align*}
		\sum_{i\in [T]} a^{\circ}_{t-i} b^{\circ}_{s-i} = \sum_{i \in [T]} a^{\circ}_{i-s} b^{\circ}_{i-t}
	\end{align*}
\end{proposition}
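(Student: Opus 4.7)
The plan is to prove both sides equal a common expression by performing a change of summation variable on each side, exploiting that the map $i \mapsto t - i \pmod T$ is a bijection of $\mathbb{Z}/T\mathbb{Z}$ to itself (similarly for $i \mapsto i - s \pmod T$), and that the circular extension $v^\circ$ is $T$-periodic by definition.

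First I would rewrite the left-hand side by substituting $m = t - i \pmod T$ (so $i = t - m \pmod T$, and as $i$ ranges over $[T]$, so does $m$). Under this substitution $s - i \equiv s - t + m \pmod T$, so using $T$-periodicity of $a^\circ$ and $b^\circ$,
\begin{equation*}
	\sum_{i \in [T]} a^\circ_{t-i} b^\circ_{s-i} = \sum_{m \in [T]} a^\circ_m \, b^\circ_{m - (t-s)}\,.
\end{equation*}
Next I would rewrite the right-hand side analogously, substituting $k = i - s \pmod T$ (again a bijection of $[T]$). Then $i - t \equiv k + s - t = k - (t-s) \pmod T$, so
\begin{equation*}
	\sum_{i \in [T]} a^\circ_{i-s} b^\circ_{i-t} = \sum_{k \in [T]} a^\circ_k \, b^\circ_{k - (t-s)}\,.
\end{equation*}
Since the two reduced sums are identical (merely relabeled dummy index), the identity follows.

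The only subtlety is justifying the change of variable when $t - i$ or $i - s$ falls outside $[T]$: this is exactly where the circular extension notation does the work, since $v^\circ_r = v^\circ_{r \pmod T}$ by construction, so shifting a sum index modulo $T$ leaves each summand unchanged. I do not anticipate a genuine obstacle here; the statement is essentially the observation that circular correlation is symmetric in its two arguments up to an index reflection, and the short chain of substitutions above makes this explicit.
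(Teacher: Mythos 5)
Your proof is correct and is essentially the same argument as the paper's: the paper applies the single bijection $\sigma(i)=t+s-i \pmod T$ to map the left sum directly onto the right, whereas you reduce both sides to the common form $\sum_m a^\circ_m b^\circ_{m-(t-s)}$ via the substitutions $m=t-i$ and $k=i-s$, whose composition is exactly the paper's $\sigma$. The periodicity justification you flag is the same one the paper relies on implicitly.
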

\begin{proof}[Proof of Proposition~\ref{prop:circular-symmetry}]
	Define the correspondence $\sigma:~ i \mapsto t+s-i\pmod{T}$, note this is a bijective map from $\sigma: [T]\rightarrow [T]$.
	\begin{align*}
		\sum_{i\in [T]} a^{\circ}_{t-i} b^{\circ}_{s-i} = \sum_{i \in [T]} a^{\circ}_{t-\sigma(i)} b^{\circ}_{s-\sigma(i)} = \sum_{i \in [T]} a^{\circ}_{i-s} b^{\circ}_{i-t} \;.
	\end{align*}
\end{proof}

\subsection{Proof of Proposition~\ref{prop:second-moment}}

	Observe that
	\begin{align*}
		\E_{\bx} \left[ \big( \widehat{\Delta}(q) -  \Delta(q) \big)^2 \right]  = \E[ \bW_T^2 ] = \tfrac{1}{T^2} \sum_{i,j} \E[\bw_i \bw_j] \;.
	\end{align*}
	Note for $i\neq j$ (w.l.o.g. say $i>j$), due to the martingale difference property
	\begin{align*}
		\E[\bw_i \bw_j] = \E\big[ \bw_j \E [ \bw_i | \cF_j ]\big] = 0 \;.
	\end{align*}
	Therefore we need to calculate 
	\begin{align*}
		\E[ \bW_T^2 ] &= \tfrac{1}{T^2} \sum_{i}\E[\bw_i^2] = \tfrac{1}{T^2} \sum_i  \sum_{j:j<i} (H_{ij}+H_{ji})^2 + L_i^2 \\
		&= \tfrac{1}{T^2} \sum_{i\neq j}  (H_{ij}^2 + H_{ij}H_{ji}) + \frac{1}{T^2} \sum_i L_i^2 \;.
	\end{align*}
	The remaining argument is algebraic manipulation using properties of circular convolution. We will show
	\begin{align*}
		\sum_{i\neq j}  H_{ij}^2 + H_{ij}H_{ji} &= T \cdot \left(\| g \ast q \|_2^2 + \langle g \ast g, q \ast q \rangle - 2 \langle q, g \rangle^2 \right) \;,\\
		\sum_i L_i^2 &=  \| (\mathbf{1} \ast g + 2e) \ast  q \|_2^2\;.
	\end{align*}
	
First, for $H$ we have
	\begin{align*}
		\sum_{i, j} H_{ij}^2  &= \sum_{i, j}  \sum_{t \in [T]} q^{\circ}_{t-i} g^{\circ}_{t-j} \sum_{s \in [T]} q^{\circ}_{s-i} g^{\circ}_{s-j} \\
		&= \sum_{t,s} \sum_{i}  q^{\circ}_{t-i} q^{\circ}_{s-i} \big(\sum_{j} g^{\circ}_{t-j} g^{\circ}_{s-j}\big) \\
		&= \sum_{t,s} \sum_{i}  q^{\circ}_{t-i} q^{\circ}_{s-i} \big(\sum_{j} g^{\circ}_{j-s} g^{\circ}_{j-t}\big) \quad \text{by Prop.~\ref{prop:circular-symmetry}}  \\
		&= \sum_{i,j} \big(\sum_{t} g^{\circ}_{j-t}  q^{\circ}_{t-i} \big) \big(\sum_{s} g^{\circ}_{j-s}  q^{\circ}_{s-i} \big)\\
		&= \sum_{i,j} (g \ast q)_{j-i\pmod{T}}^2 \\
		&= T \cdot \| g \ast q \|_2^2 \;.
	\end{align*}
	Similarly, we have
	\begin{align*}
		\sum_{i, j} H_{ij} H_{ji} &= \sum_{i, j}  \sum_{t \in [T]} q^{\circ}_{t-i} g^{\circ}_{t-j} \sum_{s \in [T]} q^{\circ}_{s-j} g^{\circ}_{s-i} \\
		&= \sum_{t,s} \sum_{i}  q^{\circ}_{t-i} g^{\circ}_{s-i} \big(\sum_{j} g^{\circ}_{t-j} q^{\circ}_{s-j}\big) \\
		& =\sum_{t,s} \sum_{i}  q^{\circ}_{t-i} g^{\circ}_{s-i} \big(\sum_{j} g^{\circ}_{j-s} q^{\circ}_{j-t}\big) \quad \text{by Prop.~\ref{prop:circular-symmetry}} \\
		& = \sum_{i,j} \big(\sum_{t} q^{\circ}_{j-t}  q^{\circ}_{t-i} \big) \big(\sum_{s} g^{\circ}_{j-s}  g^{\circ}_{s-i} \big) \\
		& = \sum_{i,j} (g \ast g)_{j-i\pmod{T}} (q \ast q)_{j-i\pmod{T}}  \\
		&= T \cdot \langle g \ast g, q \ast q \rangle \;.
	\end{align*}
	And thus
	\begin{align*}
		 \sum_{i\neq j} H_{ij}^2 + H_{ij}H_{ji} =  T \cdot \left(\| g \ast q \|_2^2 + \langle g \ast g, q \ast q \rangle - 2 \langle q, g \rangle^2 \right) \;.
	\end{align*}
	
	For $L$, let $h=\mathbf{1} \ast g + 2e$. We then have
	\begin{align*}
		\sum_i L_i^2 &= \sum_i \sum_{t,s}  q^{\circ}_{t-i} h_{t} q^{\circ}_{s-i} h_{s} = \sum_{t,s} h_{t} h_{s} \sum_i  q^{\circ}_{t-i} q^{\circ}_{s-i} \\
		&= \sum_{t,s} h_{t} h_{s}  \sum_i  q^{\circ}_{i-s} q^{\circ}_{i-t} =  \sum_i (h \ast q)_i^2 = \| h \ast q \|_2^2 \;.
	\end{align*}

\subsection{Proof of Proposition~\ref{prop:plug-in-estimate}}
\begin{proof}[Proof of Proposition~\ref{prop:plug-in-estimate}]
To show the consistency of $\widehat{\cV}_{Q} \rightarrow \cV_{Q}$, we will show consistency for each term in \eqref{eqn:est-VQ}.

	First we will show $\| \widehat{g}_{<K} \ast q \|_2^2 \stackrel{a.s.}{\rightarrow} \| g \ast q \|_2^2$. For $t\geq 2K$
	\begin{align*}
		|(g \ast q)_t| \leq  \sum_{i \in [K]} |g_{t-i}| \leq K c^{t-K}
	\end{align*}
	and thus $\| g \ast q \|_2^2 - \sum_{t \in [2K]} (g \ast q)_t^2 \leq K^2 c^{2K} = o_T(1)$ with $K = \Theta(\log(T))$. Note that for $t < 2K$,
	$
		(g \ast q)_t = \sum_{i \in [K]} g^{\circ}_{t-i} q_i
	$
	and recall the uniform consistency for $\sup_{k < 2K} \left| \widehat{\tau}(u_k) - g_k \right| \precsim \tfrac{\log(T)}{\sqrt{T}}$ (Theorem~\ref{thm:uniform-consistency}), we have with probability at least $1- 4T^{-2}$
	\begin{align*}
		| (g \ast q)_t - (\widehat{g}_{<K} \ast q)_t | \precsim \left\{
		\begin{matrix}
			K\tfrac{\log(T)}{\sqrt{T}} + c^{T-1-K} = o_T(1) & \text{for}~ t \in [0, K)\\
			K\tfrac{\log(T)}{\sqrt{T}} =  o_T(1) & \text{for}~ t \in [K, 2K)
		\end{matrix} \right.
	\end{align*}
	and thus, we have
	\begin{align*}
		\| \widehat{g}_{<K} \ast q \|_2^2 - \sum_{t\in[2K]} (g \ast q)_t^2 = o_T(1) \;.
	\end{align*}
	To sum up, the above derivation implies $\| \widehat{g}_{<K} \ast q \|_2^2 \stackrel{a.s.}{\rightarrow} \| g \ast q \|_2^2$. It is immediate to see that the proof follows the same steps if one changes the circular convolution to the linear convolution; one needs to change the subscript from $i\in [K]$ to $i \in [K \wedge t]$, and from $g_{t-i}^\circ$ to $g_{t-i}$.

	Now we study the term $\langle  \widehat{g}_{<2K} \ast  \widehat{g}_{<2K}, q \ast q \rangle \stackrel{a.s.}{\rightarrow}  \langle g \ast g, q \ast q \rangle$. First note that
	\begin{align*}
		\langle g \ast g, q \ast q \rangle = \sum_{t \in [2K]} (g \ast g)_t (q \ast q)_t \;,
	\end{align*}
	and that $q \ast q$ is supported on the first $2K$-entries.
	For $t < 2K$, $(g \ast g)_t - \sum_{i \in [2K]} g_i g^{\circ}_{t-i} \precsim c^{2K}$, and by the uniform consistency in Theorem~\ref{thm:uniform-consistency} and triangle inequality, we have
	\begin{align*}
		&| (g \ast g)_t - (\widehat{g}_{<2K} \ast  \widehat{g}_{<2K})_t |  \leq | \sum_{i \in [2K]} g^{\circ}_{t-i}  g_i - \sum_{i \in [2K]} \widehat{g}^{\circ}_{t-i} \widehat{g}_{i} |  + c^{2K} \\
		& \leq \sum_{i \in [2K]} g^{\circ}_{t-i} \cdot \max_{i\in [2K]}|\widehat{g}_i - g_i| + \sum_{i \in [2K]} |\widehat{g}_{i}| \cdot \max_{i\in [2K]} |\widehat{g}^{\circ}_{t-i} - g^\circ_{t-i} | + c^{2K}  \precsim
		    \mathrm{Poly}(K) \tfrac{\log(T)}{\sqrt{T}}	+ c^{2K}  \;. \nonumber
	\end{align*}
	Therefore,
	$\langle  \widehat{g}_{<2K} \ast  \widehat{g}_{<2K}, q \ast q \rangle \stackrel{a.s.}{\rightarrow}  \langle g \ast g, q \ast q \rangle$. Again, the proof follows the same steps if one changes the circular convolution to the linear convolution.

	Lastly, Theorem~\ref{thm:uniform-consistency} showed that
	\begin{align*}
		 \langle  \widehat{g}_{K}, q \rangle -  \langle  g, q \rangle   \stackrel{a.s.}{\rightarrow} 0 \;.
	\end{align*}
	Put three terms together, we have shown $\widehat{\cV}_{Q}  \stackrel{a.s.}{\rightarrow} \cV_{Q}$.

	Now we move on to study the term $\cV_{L}$.
	\begin{align*}
		\cV_{L} &=\tfrac{1}{T}\| (\langle \mathbf{1},g \rangle \mathbf{1} + 2e) \ast  q \|_2^2 = \tfrac{1}{T} \sum_{t\in [T]} \big(  \langle \mathbf{1},g \rangle \langle \mathbf{1},q \rangle  + 2 \sum_{i\in [K]} e_{t-i}^\circ  q_i \big)^2
	\end{align*}
	We introduce the estimator for the error vector
	\begin{align*}
		\widehat{e}_t := \by_t - (\bx \ast \widehat{g}_{<K})_t = e_t +  \underbrace{(\bx \ast g_{<K})_t -  (\bx \ast \widehat{g}_{<K})_t}_{(i)} + \underbrace{(\bx \ast g_{\geq K})_t }_{(ii)} \;.
	\end{align*}
	To bound the term $(i)$, we note that by Theorem~\ref{thm:uniform-consistency}
	\begin{align*}
		| (\bx \ast g_{<K})_t -  (\bx \ast \widehat{g}_{<K})_t | = | \sum_{i \in [K]} \bx^\circ_{t-i} (g_i - \widehat{g}_i) | \leq K \sup_{i \in [K]} |g_i - \widehat{g}_i| \precsim \frac{K\log(T)}{\sqrt{T}} \;.
	\end{align*}
	To bound the term $(ii)$, we use the fact
	\begin{align*}
		| (\bx \ast g_{\geq K})_t | \leq \| g_{\geq K} \|_1 \| \bx \|_{\infty} \precsim c^{K} \;.
	\end{align*}
	Thus for all $t \in [T]$,
	\begin{align*}
		\sup_{t \in [T]} | \widehat{e}_t - e_t |  &\precsim \frac{K\log(T)}{\sqrt{T}} + c^{K} \;, \\
		\sup_{t \in [T]} | \sum_{i\in [K]} \widehat{e}_{t-i}^\circ q_i - \sum_{i\in [K]} e_{t-i}^\circ q_i |   &\precsim \frac{K^2\log(T)}{\sqrt{T}} + Kc^{K}  = o_T(1) \;.
	\end{align*}
	So far, we have proved
	\begin{align*}
		\sup_{t \in [T]} \left| \big( \langle \mathbf{1}, \widehat{g}_K \rangle \langle \mathbf{1},q \rangle +  2 (\widehat{e} \ast  q)_t\big)  -   \big(\langle \mathbf{1},g \rangle \langle \mathbf{1},q \rangle + 2 (e \ast  q)_t \big) \right| = o_T(1)
	\end{align*}
	and thus, as a direct consequence $\widehat{\cV}_{L}  \stackrel{a.s.}{\rightarrow} \cV_{L}$.
\end{proof}

\subsection{Proof of Lemma~\ref{lemma:H-inf-norm}}

By the circular symmetry property in Proposition~\ref{prop:circular-symmetry}, we have
	\begin{align*}
\max_{j: j\neq i} \omega_{ij}^2 \leq
		\sum_{j} \omega_{ij}^2 - \omega_{ii}^2 &= 2\sum_{t,s}  q^{\circ}_{t-i}  q^{\circ}_{s-i} \sum_j g^{\circ}_{t-j}  g^{\circ}_{s-j} + 2  \sum_{t,s}  q^{\circ}_{t-i}  g^{\circ}_{s-i} \sum_j g^{\circ}_{t-j}  q^{\circ}_{s-j} - 4 \langle g, q\rangle^2  \\
		& = 2 \sum_{t,s}  q^{\circ}_{t-i}  q^{\circ}_{s-i} \sum_j g^{\circ}_{j-s}  g^{\circ}_{j-t} + 2  \sum_{t,s}  q^{\circ}_{t-i}  g^{\circ}_{s-i} \sum_j g^{\circ}_{j-s}  q^{\circ}_{j-t} - 4 \langle g, q\rangle^2  \\
		&  = 2\sum_j (g \ast q)^2_{j-i\pmod{T}} + 2\sum_j (g \ast g)_{j-i\pmod{T}} (q \ast q)_{j-i\pmod{T}} - 4 \langle g, q\rangle^2  \\
		& = 2\| g \ast q \|_2^2 + 2 \langle g \ast g, q \ast q \rangle - 4 \langle g, q\rangle^2  \;.
	\end{align*}
Note that the above holds for any $i$, we finish the proof.

\subsection{Proof of Lemma~\ref{lemma:fourth-moment}}

Define $\omega_{ij} := H_{ij}+ H_{ji}$. 
we have
	\begin{align*}
		\omega_{ij}  = \sum_{t \in [T]} q^\circ_{t-i} g^\circ_{t-j} + q^\circ_{t-j} g^\circ_{t-i} = \omega_{ji} \;.
	\end{align*}
Let $\bG = \sum_{i < j } \bz_i \bz_j \omega_{ij} = T \cdot \bH_{T}$. Invoking Proposition~\ref{prop:edge-indices-graph} yields the following equality
 	\begin{align}\label{eq:desired_quartic_expression}
 		\frac{\E [\bH_{T}^4 ]}{\big( \E [\bH_T^2] \big)^2 } = 3 -  \frac{\sum_{i \neq j} \omega^4_{ij}}{T^4 \big( \E [\bH_T^2] \big)^2} + \frac{\sum_{i \neq j \neq k \neq l} \omega_{ij} \omega_{jk} \omega_{kl} \omega_{li}}{T^4 \big( \E [\bH_T^2] \big)^2} \; .
 	\end{align}	
	
We already have by the argument in Proposition~\ref{prop:second-moment} that
\begin{align*}
	\E [\bH_T^2] = \tfrac{1}{T} \cdot \left(\| g \ast q \|_2^2 + \langle g \ast g, q \ast q \rangle - 2 \langle q, g \rangle^2 \right)\,.
\end{align*}
Thus, to complete the theorem we need to bound the quantites
\begin{align}\label{eq:quartic_terms_to_bound}
\sum_{i \neq j} \omega^4_{ij}~~\text{and}~~\sum_{i \neq j \neq k \neq l} \omega_{ij} \omega_{jk} \omega_{kl} \omega_{li}\,.
\end{align}

For the first term, we will show
	\begin{align}\label{eq:quartic_ineq1}
		\sum_{i \neq j} \omega_{ij}^4 \leq 4T  \cdot \left(\| g \ast q \|_2^2 + \langle g \ast g, q \ast q \rangle - 2 \langle q, g \rangle^2 \right)^2 \;,
	\end{align}
To see this, we first establish
	\begin{align*}
		\sum_{i \neq j} \omega_{ij}^2 = 2 \sum_{i\neq j} H_{ij}^2 + H_{ij}H_{ji} = 2 T \cdot \left(\| g \ast q \|_2^2 + \langle g \ast g, q \ast q \rangle - 2 \langle q, g \rangle^2 \right) \;.
	\end{align*}
By Lemma~\ref{lemma:H-inf-norm}, we further have
\begin{align*}
	\max_{i \neq j} \omega_{ij}^2 \leq 2\| g \ast q \|_2^2 + 2 \langle g \ast g, q \ast q \rangle - 4 \langle g, q\rangle^2  \;.
\end{align*}
Put these two bounds together, we find
	\begin{align*}
		\sum_{i \neq j} \omega_{ij}^4 &\leq \max_{i\neq j}  \omega_{ij}^2  \cdot \sum_{i \neq j} \omega_{ij}^2 \leq 4T  \left(\| g \ast q \|_2^2 + \langle g \ast g , q \ast q \rangle - 2 \langle q, g \rangle^2 \right)^2  \;,
	\end{align*}
which verifies the inequality~\eqref{eq:quartic_ineq1}.
	
Next, we aim to show
\begin{align}\label{eq:quartic_ineq2}
		|\sum_{i \neq j \neq k \neq l} \omega_{ij} \omega_{jk} \omega_{kl} \omega_{li} | \leq 16T \cdot\| (|g| \ast |q|) \ast (|g| \ast |q|) \|_2^2  \;.
\end{align}
To proceed, we first note
	\begin{align}
		&|\sum_{i \neq j \neq k \neq l} \omega_{ij} \omega_{jk} \omega_{kl} \omega_{li} | \leq \sum_{i, j,k, l} |\omega_{ij} \omega_{jk} \omega_{kl} \omega_{li} | \nonumber \\
		& \leq \sum_{i,  k}  \big(  \sum_{j}|\omega_{ij} \omega_{jk}| \big) \big(  \sum_{l} | \omega_{kl} \omega_{li} | \big) \label{eqn:crazy-sum}
	\end{align}
	
	Now, the above Equation~\eqref{eqn:crazy-sum} can be break down into a sum of $2^4 = 16$ terms
	\begin{align}
		 \sum_{j}|\omega_{ij} \omega_{jk}| & \leq \sum_{j}  \big(\sum_{t} |q^{\circ}_{t-i} g^{\circ}_{t-j}| + |q^{\circ}_{t-j} g^{\circ}_{t-i}| \big) \big( \sum_{s} |q^{\circ}_{s-j} g^{\circ}_{s-k}| + |q^{\circ}_{s-k} g^{\circ}_{s-j}| \big) \nonumber\\ 
		 & = \sum_{t,s} \sum_j |q^{\circ}_{t-i} g^{\circ}_{t-j}| |q^{\circ}_{s-j} g^{\circ}_{s-k}|  +  \sum_{t,s} \sum_j |q^{\circ}_{t-j} g^{\circ}_{t-i}|  |q^{\circ}_{s-k} g^{\circ}_{s-j}| \label{eqn:A-44}\\
		 & \quad + \sum_{t,s} \sum_j |q^{\circ}_{t-i} g^{\circ}_{t-j}||q^{\circ}_{s-k} g^{\circ}_{s-j}| + \sum_{t,s} \sum_j |q^{\circ}_{t-j} g^{\circ}_{t-i}| |q^{\circ}_{s-j} g^{\circ}_{s-k}| \label{eqn:A-45}
	\end{align}
	Each of the above terms is handled similarly, and we will only showcase one. Denote $|g|$ to be the vector taking entrywise absolute value to the vector $g$, we know
	\begin{align*}
		\sum_{t,s} \sum_j |q^{\circ}_{t-i} g^{\circ}_{t-j}| |q^{\circ}_{s-j} g^{\circ}_{s-k}| & = \sum_{t,s} \sum_j |q^{\circ}_{t-i} g^{\circ}_{j-s}| |q^{\circ}_{j-t} g^{\circ}_{s-k}| \\
		&= \sum_j (|q| \ast |q|)_{j-i}^\circ (|g| \ast |g|)_{j-k}^\circ
	\end{align*}
	where the first step is by the circular symmetry property in Proposition~\ref{prop:circular-symmetry}. 
	Using the same idea on all 4-terms in \eqref{eqn:A-44}-\eqref{eqn:A-45}, we have shown
	\begin{align*}
		 \sum_{j}|\omega_{ij} \omega_{jk}| & = \sum_j (|q| \ast |q|)_{j-i}^\circ (|g| \ast |g|)_{j-k}^\circ + \sum_j (|q| \ast |q|)_{j-k}^\circ (|g| \ast |g|)_{j-i}^\circ \\
		 & + \sum_j (|g| \ast |q|)_{j-i}^\circ (|g| \ast |q|)_{j-k}^\circ + \sum_j (|g| \ast |q|)_{j-k}^\circ (|g| \ast |q|)_{j-i}^\circ \;.
	\end{align*}
	
	Now one of the $16$-terms in \eqref{eqn:crazy-sum} takes the form
	\begin{align*}
		&\sum_{i,  k} \sum_j (|g| \ast |q|)_{j-i}^\circ (|g| \ast |q|)_{j-k}^\circ \sum_l (|g| \ast |q|)_{l-i}^\circ (|g| \ast |q|)_{l-k}^\circ  \\
		&= \sum_{i,  k} \sum_j (|g| \ast |q|)_{k-j}^\circ (|g| \ast |q|)_{i-j}^\circ \sum_l (|g| \ast |q|)_{l-i}^\circ (|g| \ast |q|)_{l-k}^\circ\\
		& = \sum_{j, l} \sum_i (|g| \ast |q|)_{i-j}^\circ (|g| \ast |q|)_{l-i}^\circ \sum_k (|q| \ast |q|)_{k-j}^\circ  (|q| \ast |q|)_{l-k}^\circ \\
		& = \sum_{j, l} ( (|g| \ast |q|) \ast (|g| \ast |q| ) )_{l-j}^\circ ( (|g| \ast |q| ) \ast (|g| \ast |q| ))_{l-j}^\circ \\
		& = T \langle (|g| \ast |q|) \ast (|g| \ast |q| )  , (|g| \ast |q|) \ast (|g| \ast |q| )  \rangle
	\end{align*}
	where the first step uses the circular symmetry property in Proposition~\ref{prop:circular-symmetry} again.
	Note that each of the $16$ terms will look similar to the above, which looks like an $8$-th moment with a balanced number of $g$ and $q$'s ($4$ each).

	To bound each of the $16$ terms by the representative term, we need to recall Parseval's identity and convolution theorem: for a vector $g \in \mathbb{R}^T$, denote $\reallywidehat{g} \in \mathbb{C}^T$ to denote its discrete-time Fourier transform (DFDT), then
	\begin{align*}
		\langle g, q \rangle = \tfrac{1}{T} \sum_{k \in [T]} \overline{\reallywidehat{g  }_{k}} \reallywidehat{q  }_{k}  \;.\\
		\reallywidehat{ g \ast q }_k = \reallywidehat{g  }_k \cdot \reallywidehat{q}_k \;.
 	\end{align*}
	Here, for a complex value $z \in \mathbb{C}$, we denote $\overline{z}$ to be its complex conjugate and $\| z \|$ to denote its modulus.
	Take one of the 16 terms, say
	\begin{align*}
		&\langle (|g| \ast |g|) \ast (|g| \ast |g|) , (|q| \ast |q|) \ast (|q| \ast |q| ) \rangle  \\
		& =  \tfrac{1}{T} \sum_{k \in [T]}  \overline{\reallywidehat{|g|}_k^4} \reallywidehat{|q|}_k^4  \\
		& \leq \tfrac{1}{T} \sum_{k \in [T]}  \big\| \reallywidehat{|g|  }_k \big\|^4 \big\|\reallywidehat{|q | }_{k}  \big\|^4 = \| (|g| \ast |q|) \ast (|g| \ast |q|) \|_2^2 \;.
	\end{align*}
Therefore, we have verified~\eqref{eq:quartic_ineq2}.
	
Using the inequalities~\eqref{eq:quartic_ineq1} and~\eqref{eq:quartic_ineq2} to upper bound~\eqref{eq:desired_quartic_expression} thus completes the proof.

\end{document}